\documentclass[11pt]{article}
\usepackage{snapshot}
\usepackage{fullpage}

\usepackage{amsfonts}
\usepackage{amssymb}
\usepackage{amsthm}
\usepackage{amsmath}
\usepackage{stmaryrd}
\usepackage{tipa}
\usepackage{times}
\usepackage{mathrsfs}
\usepackage{latexsym}
\usepackage{mdwlist}

\newcommand{\vol}{\textrm{vol}}
\newcommand{\sgn}{\textrm{sgn}}
\newcommand{\Tr}{\textrm{Tr}}
\newcommand{\LL}{{\bf \mathcal{L}}}
\newcommand{\supp}{\textrm{supp}}
\newcommand{\p}{\textbf{p}}
\newcommand{\q}{\textbf{q}}
\newcommand{\dd}{\textbf{d}}
\newcommand{\rr}{\textbf{r}}
\newcommand{\qq}{\tilde{\textbf{q}}}
\newcommand{\vv}{\textbf{v}}
\newcommand{\1}{\textbf{1}}
\newcommand{\w}{\textbf{w}}
\newtheorem{definition}{Definition}
\newtheorem{lemma}{Lemma}
\newtheorem{theorem}{Theorem}
\newtheorem{proposition}{Proposition}
\newtheorem{fact}{Fact}

\newtheorem{corollary}{Corollary}

\providecommand{\abs}[1]{\lvert#1\rvert} \providecommand{\norm}[1]{\lVert#1\rVert}

\title{Detecting and Characterizing Small Dense Bipartite-like Subgraphs by the Bipartiteness Ratio Measure
}
\author{Angsheng Li\footnote{State Key Laboratory of
Computer Science, Institute of Software, Chinese Academy of Sciences}\\ angsheng@ios.ac.cn
\and Pan Peng\footnote{State Key Laboratory of
Computer Science, Institute of Software, Chinese Academy of Sciences; Department of Computer Science,
Technische Universit{\"a}t Dortmund}\\
pan.peng@tu-dortmund.de}
%\maketitle
\date{}

\begin{document}
\maketitle

\begin{abstract}
We study the problem of finding and characterizing subgraphs with small \textit{bipartiteness ratio}. We give a bicriteria approximation algorithm \verb|SwpDB| such that if there exists a subset $S$ of volume at most $k$ and bipartiteness ratio $\theta$, then for any $0<\epsilon<1/2$, it finds a set $S'$ of volume at most $2k^{1+\epsilon}$ and bipartiteness ratio at most $4\sqrt{\theta/\epsilon}$. By combining a truncation operation, we give a local algorithm \verb|LocDB|, which has asymptotically the same approximation guarantee as the algorithm \verb|SwpDB| on both the volume and bipartiteness ratio of the output set, and runs in time $O(\epsilon^2\theta^{-2}k^{1+\epsilon}\ln^3k)$, independent of the size of the graph. Finally, we give a spectral characterization of the small dense bipartite-like subgraphs by using the $k$th \textit{largest} eigenvalue of the Laplacian of the graph.
\end{abstract}

\section{Introduction}
We study the problem of finding subgraphs with small \textit{bipartiteness ratio}. Let $G=(V,E)$ be an undirected graph. Let $L,R$ be two disjoint vertex subsets and $U:=L\cup R$. The bipartiteness ratio of $L,R$ is defined as
\begin{eqnarray}
\beta(L,R)=\frac{2e(L)+2e(R)+e(U,\bar{U})}{\vol(U)},\label{def:bipar}
\end{eqnarray}
where $e(L),e(U,\bar{U})$ denote the number of edges in $L$ and the number of edges leaving from $U$ to the rest of the graph, respectively; and $\vol(U)$, called the volume of $U$, is defined to be the sum of degrees of vertices in $U$. The concept of bipartiteness ratio was recently introduced and used as a subroutine to designing approximation algorithms for Max Cut problem by Trevisan~\cite{Tre09:maxcut}. In particular, Trevisan showed that this combinatorial object has close relation to the largest eigenvalue of the \textit{Laplacian} of $G$, the Goemans-Williamson Relaxation and graph sparsification.

Another motivation of studying bipartiteness ratio is that it can be considered as a quality of \textit{dense bipartite-like subgraphs}, which in turn characterize the communities (or clusters) in Web graphs~\cite{Pen12:local,KRRT99:trawling}. More specifically, a dense bipartite-like subgraph is a pair of disjoint vertex subsets $L,R$ such that `most' of the edges involving the vertices in $L\cup R$ lie between $L$ and $R$. Equivalently, we say that $L,R$ form a dense bipartite subgraph if `few' edges lie totally in $L$ or $R$, or leaving $L\cup R$ to the rest of the graph. The latter formulation turns out to be well captured by the bipartiteness ratio measure of $L,R$, as in Definition~(\ref{def:bipar}), the numerator involves all the edges that are \textit{not} between $L$ and $R$, and the dominator involves all the edges incident to $L\cup R$. It is intuitive that the smaller the bipartiteness, the more likely it behaves like a dense bipartite subgraph. In real applications, we suggest first detecting sets with small bipartiteness ratio (using the algorithms below) and then combining some heuristic algorithms (such as the prune-filter technique used in \cite{KRRT99:trawling}) to process the found sets and to better exploit the community structure of the Web graph.

Thus, we will use the bipartiteness ratio (abbreviated as \textit{B-ratio}) as a measure of a set being dense bipartite-like. We want to extract subgraphs with small B-ratio, which corresponds to good cyber-communities. Furthermore, we are interested in finding \textit{small} communities, which generally contains more interesting information than large communities and may be more substantial in large scale networks. For example, Leskovec et al. investigate the community structure of many real networks by the \textit{conductance} measure~\cite{LLDM09:community,LLM10:empirical}, and they argue that large networks may have a core-periphery structure, where the periphery is composed of easily separable small communities and the nodes in the expander-like core are so intermingled that it is much harder to extract large communities (if exist) from it.

In order to make our algorithm practical, we would like to design a local algorithm to extract subgraphs with small B-ratio. A local algorithm, introduced by Spielman and Teng~\cite{ST04:linear}, is one that given as input a vertex, it only explores a small portion of the graph and finds a subgraph with good property, which has found applications in graph sparsificasion, solving linear equations~\cite{Spi10:spectral}, and designing near-linear time algorithms~\cite{Ten10:laplacian}. Local algorithms have also shown to be both effective and efficient on real network data~(e.g, \cite{LLDM09:community,LLBSB09:protein}).

\subsection{Our Results}
We give approximation, local algorithms and spectral characterization of finding the small subgraphs with small B-ratio, as we argued above, with the goal of extracting small cyber-communities. In the following, we will use the terminology of small dense bipartite-like subgraphs to indicate small subgraphs with small B-ratio.

We first give a bicriteria approximation algorithm for finding the small dense bipartite-like subgraph.

\begin{theorem}~\label{thm:smallBR}
Assume that $G$ has a set $U=(L,R)$ such that $\beta(L,R)\leq \theta$ and $\vol(U)\leq k$, where $\theta<1/4$ and $k>4$, then for any $0<\epsilon<1/2$, there exists an algorithm \verb|SwpDB|$(G,k,\theta,\epsilon)$ that runs in polynomial time and finds a set $(X,Y)$ such that $\vol(X\cup Y)\leq 2k^{1+\epsilon}$, and $\beta(X,Y)\leq 4\sqrt{\theta/\epsilon}$.
\end{theorem}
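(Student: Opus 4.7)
The strategy is to adapt Spielman--Teng's sweep paradigm for conductance to the \emph{signed} setting dictated by the bipartiteness ratio. Writing $\chi=\1_L-\1_R$ for the signed indicator of the hidden pair $(L,R)$, the identity $\chi^{T}(D+A)\chi=4e(L)+4e(R)+e(U,\bar U)$ together with $\chi^{T}D\chi=\vol(U)$ yields
$$\frac{\chi^{T}(D+A)\chi}{\chi^{T}D\chi}\;\le\;2\beta(L,R)\;\le\;2\theta,$$
so $y=D^{1/2}\chi/\sqrt{\vol(U)}$ is a unit test vector of Rayleigh quotient at most $2\theta$ for the signless normalized Laplacian $\mathcal{Q}=I+D^{-1/2}AD^{-1/2}$. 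Equivalently, $W:=D^{-1/2}AD^{-1/2}$ has a near-bottom eigenspace (eigenvalues close to $-1$) with which $y$ is heavily correlated, and the sign pattern of any vector in that subspace approximately separates $L$ from $R$.

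First I would fix a good seed. Averaging the squared projection $\langle y,\,D^{1/2}e_v/\sqrt{d_v}\rangle^{2}$ against the $d$-weighted measure over $v\in U$ shows that some $v\in U$ yields a starting indicator $e_v$ whose correlation with $y$ is $\Omega(d_v/\vol(U))$; since the algorithm may enumerate over seeds $v\in V$ and return the best sweep produced, we may assume such a $v$ is used. From that seed I would run the signed power iteration $\q_t=(-D^{-1}A)^{t}\,e_v$ for $t=\Theta(\epsilon\theta^{-1}\ln k)$. A standard spectral-decay argument in the eigenbasis of $W$ damps eigenmass at eigenvalues bounded away from $-1$ at geometric rate and gives the Rayleigh bound
$$\mathcal{R}_t\;:=\;\frac{\q_t^{T}(D+A)\q_t}{\q_t^{T}D\q_t}\;\le\;\frac{4\theta}{\epsilon}.$$

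For the rounding step I would invoke Trevisan's signed Cheeger inequality for the bipartiteness ratio. Ordering the vertices by $\q_t(u)/d_u$ and, for each threshold $\tau>0$, forming the signed sweep pair
$$X_\tau=\{u:\q_t(u)/d_u>\tau\},\qquad Y_\tau=\{u:\q_t(u)/d_u<-\tau\},$$
the sweep lemma guarantees some $\tau^\ast$ with $\beta(X_{\tau^\ast},Y_{\tau^\ast})\le\sqrt{2\mathcal{R}_t}\le 4\sqrt{\theta/\epsilon}$, which is the target bipartiteness.

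The remaining step, and the one I expect to be the most delicate, is enforcing $\vol(X\cup Y)\le 2k^{1+\epsilon}$. I would restrict \verb|SwpDB| to sweep pairs meeting this volume constraint and argue that a good such $\tau$ still exists. The key estimate is that after only $t=\Theta(\epsilon\theta^{-1}\ln k)$ signed-walk steps from a vertex inside a bipartite cluster of volume $\le k$, the $d$-weighted $\ell_2$ mass of $\q_t$ that has leaked onto vertices outside any volume-$2k^{1+\epsilon}$ superset of $U$ is at most a $k^{-\Omega(1)}$ fraction of the total; consequently the integrated-sweep Cheeger sum is dominated by thresholds whose sweep sets obey the volume constraint, and one of them achieves the bound. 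Carrying out this volume-restricted \emph{two-sided} sweep argument — simultaneously handling upper and lower thresholds while tracking how the signed walk spreads out of $U$ — is the principal technical obstacle, since prior analyses in the literature are one-sided (conductance) rather than signed.
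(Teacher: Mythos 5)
The first half of your plan (the test vector $\1_L-\1_R$, the identity giving Rayleigh quotient at most $2\theta$ for $D+A$, power iteration, and Trevisan's sweep rounding) matches the paper's starting point, but the entire content of Theorem~\ref{thm:smallBR} beyond Lemma~\ref{lemma:tre} is the volume guarantee $\vol(X\cup Y)\leq 2k^{1+\epsilon}$, and there your argument is only an unproved claim. Your ``key estimate'' --- that after $t=\Theta(\epsilon\theta^{-1}\ln k)$ steps the $d$-weighted $\ell_2$ mass leaking outside every volume-$2k^{1+\epsilon}$ superset of $U$ is a $k^{-\Omega(1)}$ fraction --- does not follow from the Rayleigh-quotient bound: a small Rayleigh quotient constrains the spectral location of the mass, not its combinatorial location, and the iterated vector genuinely does spread far outside $U$ in general. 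The paper never bounds leakage at all; it sidesteps it with a Lov\'asz--Simonovits-type curve argument. One defines $J(\p,x)$, proves the Convergence Lemma~\ref{lemma:converge} (whose proof uses the combinatorial structure of the B-ratio of the sweep set, not spectral decay), and argues by contradiction: if \emph{every} sweep set of volume at most $K=2k^{1+\epsilon}$ over all seeds $v$ and all $t\leq T$ had B-ratio at least $\Theta=4\sqrt{\theta/\epsilon}$, then Lemma~\ref{lemma:upper} forces $J(\chi_vM^T,k)\leq 2^T k^{-\epsilon}$, while the spectral lower bound of Lemma~\ref{lemma:lower} produces a seed $u\in U$ with $J(\chi_uM^T,k)\geq\abs{\chi_uM^T}(U)\geq(2-2\theta)^T>2^Tk^{-\epsilon}$. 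Supplying your leakage estimate restricted to sweep sets is essentially equivalent to proving such a convergence lemma, so as written the volume half of the theorem --- which you yourself flag as the principal obstacle --- is missing rather than merely delicate.

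A secondary but real problem is your choice of iteration operator. Running $(-D^{-1}A)^t e_v$ weights the eigencomponent at $\LL$-eigenvalue $\lambda_i$ by $\abs{\lambda_i-1}^t$, so the bottom of the spectrum (e.g.\ the degree direction at $\lambda_0=0$, and anything with $\lambda_i$ near $0$) is not damped at all, yet it contributes Rayleigh value $2-\lambda_i\approx 2$; hence ``standard spectral decay'' does not yield $\mathcal{R}_t\leq 4\theta/\epsilon$ without an additional comparison of the seed's correlation with the near-$0$ eigenspace against its correlation with the near-$2$ eigenspace. The paper instead iterates $M^t=(I-D^{-1}A)^t$, which weights by $\lambda_i^t$, crushing the low end and amplifying the top by up to $2^t$; this choice is also what makes the quantities $2^t$, $(2-\Theta^2/4)^t$ and $(2-2\theta)^t$ in Lemmas~\ref{lemma:upper} and~\ref{lemma:lower} compare cleanly. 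Finally, note that no algorithmic seed selection is needed: \texttt{SwpDB} sweeps $\chi_vM^t$ for every $v\in V$ and every $t\leq T$ and outputs the best sweep set of volume at most $K$; the averaging over $v\in U$ appears only inside the existence argument of Lemma~\ref{lemma:lower}.
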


Note that the approximation ratio does not depend on the size of the graph, since the algorithm is based on a spectral characterization of the B-ratio of the graph given by Trevisan~\cite{Tre09:maxcut}~(see Lemma~\ref{lemma:tre}), which is analogous to the Cheeger's inequality for conductance (see more discussions below).

By incorporating a truncation operation we are able to give a \textit{local algorithm} for the dense bipartite subgraphs.
\begin{theorem}~\label{thm:local}
If there is a subset $U=(L,R)$ of volume $\vol(U)\leq k$ and B-ratio $\beta(L,R)\leq \theta$, where $\theta<1/12$ and $k>2560000$,then there exists a subgraph $U_\theta\subseteq U$ satisfying that $\vol(U_\theta)\geq \vol(U)/2$ and that if $v\in U_\theta$, then for any $0<\epsilon< 1/2$, there exists a local algorithm \verb|LocDB|$(G,v,k,\theta,\epsilon)$ finds a subgraph $(X,Y)$ of volume $O(k^{1+\epsilon})$ and B-ratio $O(\sqrt{\theta/\epsilon})$. Furthermore, the running time of \verb|LocDB| is $O(\epsilon^2\theta^{-2}k^{1+\epsilon}\ln^3k)$.
\end{theorem}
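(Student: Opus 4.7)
The plan is to localize the algorithm \verb|SwpDB| of Theorem~\ref{thm:smallBR}
by replacing its global signed vector (on which Trevisan's characterization
Lemma~\ref{lemma:tre} is applied) with a \emph{truncated} signed random walk
started at the seed $v$. I would follow the Spielman--Teng /
Andersen--Chung--Lang template used in the conductance setting: iterate an
appropriate walk operator (whose most negative eigenvalue exposes the
bipartite partition of $U$) for $t=\Theta(\ln k/\theta)$ steps, and after each
step zero out every coordinate whose mass falls below a truncation threshold
$\epsilon_{\mathrm{tr}}=\Theta(1/(t\cdot k^{1+\epsilon}))$. The resulting
signed vector $\tilde{\p}_t$ has support of size $O(k^{1+\epsilon}\ln k/\theta)$,
and I would feed it into the same sweep subroutine as inside \verb|SwpDB|.

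\textbf{Defining the core $U_\theta$.}
For the sweep to succeed, the walk from $v$ must leak only a small fraction of
its mass across $\partial U$ within the first $t$ steps. Standard boundary-mass
inequalities for reversible walks show that the $t$-step escape probability
averaged against the stationary measure on $U$ is controlled by a small
multiple of $t\cdot\beta(L,R)\le t\theta$. I would therefore set
$$
U_\theta:=\bigl\{\,v\in U : \Pr\bigl[t\text{-step walk from }v\text{ exits }U\bigr]\leq \tfrac{1}{10}\,\bigr\},
$$
and apply Markov's inequality to obtain $\vol(U_\theta)\geq \vol(U)/2$; the
hypotheses $\theta<1/12$ and $k>2560000$ enter precisely to make all the
hidden constants absorb correctly.

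\textbf{Correctness of the sweep on $\tilde{\p}_t$.}
For $v\in U_\theta$, the per-step $\ell_1$ truncation error is at most
$\epsilon_{\mathrm{tr}}\cdot |\supp(\tilde{\p}_t)|$, so the cumulative error
across $t$ iterations is a constant fraction of the bipartite signal retained
on $U_\theta$. Since Lemma~\ref{lemma:tre} bounds $\beta$ in terms of a
Rayleigh quotient, a constant-factor distortion of the vector costs only a
constant factor in the sweep guarantee, and I would inherit from
\verb|SwpDB| a pair $(X,Y)$ with $\vol(X\cup Y)=O(k^{1+\epsilon})$ and
$\beta(X,Y)=O(\sqrt{\theta/\epsilon})$, exactly matching the global theorem
up to constants.

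\textbf{Running time and the main obstacle.}
Each of the $t=O(\theta^{-1}\ln k)$ walk iterations touches at most
$1/\epsilon_{\mathrm{tr}}=O(\theta^{-1}k^{1+\epsilon}\ln k)$ coordinates, and the
final sweep contributes an extra $O(\ln k)$ factor for sorting; multiplying
these out (and tracking the dependence on $\epsilon$) gives the stated
$O(\epsilon^2\theta^{-2}k^{1+\epsilon}\ln^3 k)$. The hard part will be the
correctness step above: because $\tilde{\p}_t$ is a \emph{signed} vector,
naive $\ell_1$ bounds on the truncation error are too loose to protect the
Rayleigh quotient against cancellations. I would instead argue in the
degree-weighted $\ell_2$ norm that actually enters Lemma~\ref{lemma:tre},
using the fact that any coordinate killed by truncation carries mass at most
$\epsilon_{\mathrm{tr}}$ and therefore contributes negligibly to both the
numerator and denominator of the Rayleigh quotient relative to
$\beta(L,R)\cdot \vol(U)\leq \theta k$.
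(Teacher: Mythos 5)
Your overall skeleton (truncated power iteration from a seed vertex, sweeping the truncated iterates, a ``core'' set $U_\theta$ of good seeds of at least half the volume, and the same running-time accounting) matches the paper, but the two analytic steps you leave open are exactly where your proposed substitutes would fail. First, the definition of $U_\theta$ via ``escape probability $\leq 1/10$'' plus Markov is miscalibrated for this problem. With $t=\Theta(\epsilon\ln k/\theta)$ steps, a one-step leakage bound of order $\theta$ accumulates to $t\theta=\Theta(\epsilon\ln k)$, which is not small, and in fact no vertex retains a constant fraction of the bipartite signal after $t$ steps: the correct statement is multiplicative and only polynomially small in $k$. The paper proves (Lemma~\ref{lemma:lower}) that $\psi_U\LL\,\psi_U^T\geq 2-2\theta$, hence $\psi_U\LL^t\psi_U^T\geq(2-2\theta)^t$, and then shows robustness of this bound under passing to any $Z\subseteq U$ with $\vol(Z)\geq\vol(U)/2$ by projecting onto the eigenspace $H=\{i:\lambda_i\geq 2-6\theta\}$ (the $H$-norm argument), yielding $|\chi_vM^t|(U)\geq\frac{1}{400}(2-6\theta)^t$ for every $v$ in a set $U^t$ of half the volume; this is what $U_\theta$ is. Note also that the iterate is a \emph{signed} quantity under the quasi-Laplacian $M$ (not a probability distribution), so ``escape probability of a reversible walk'' is not even the right object, and the truncation threshold must be calibrated against the polynomially small signal $\approx 2^tk^{-\epsilon}$ and rescaled as $\xi_t=\xi_02^t$ because $\norm{\qq_t}_1$ doubles each step; a static threshold either blows up the support (and the running time) or forces you to a different operator whose analysis you have not supplied.

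Second, your correctness step --- control the Rayleigh quotient of the truncated vector and invoke Lemma~\ref{lemma:tre} ``up to constants'' --- does not yield the bicriteria guarantee. A Rayleigh-quotient/Trevisan-type sweep bound gives only a B-ratio bound for \emph{some} sweep set; it gives no control on the volume of the output, which is the whole point of the theorem. The paper instead sweeps \emph{every} intermediate truncated vector and runs the Lov\'asz--Simonovits-style potential $J(\p,x)$ through the Convergence Lemma (Lemma~\ref{lemma:converge}): if all sweep sets of volume at most $K=O(k^{1+\epsilon})$ had B-ratio at least $\Theta=\sqrt{48\theta/\epsilon}$, then $J(\qq_T,k)\leq \frac{2^Tk}{K}+\sqrt{k}\,(2-\Theta^2/4)^T<2^Tk^{-\epsilon}/800$, while the spectral lower bound minus the cumulative truncation error $\xi_0T2^T\vol(U)$ forces $|\qq_T|(U)\geq 2^Tk^{-\epsilon}/800$ --- a contradiction; the truncation error is tracked coordinate-wise relative to degrees ($|\rr_t-\q_t|\leq\xi_0t2^t\dd$), not in $\ell_1$ or $\ell_2$ of the whole vector. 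The hypotheses $\theta<1/12$ and $k>2560000$ enter precisely to make $(2-6\theta)^T$ beat $2^T(1600k)^{-2\epsilon/3}$ and the error term, not to absorb constants in a Markov argument. Without the potential-function (or an equivalent volume-aware) mechanism, your plan proves at best the B-ratio bound and not the $O(k^{1+\epsilon})$ volume bound, so the proposal as written has a genuine gap.
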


We remark that in both theorems, we can give alternative tradeoff on the bounds of parameters $k$ and $\theta$. For example, in Theorem~\ref{thm:local}, we can require $\theta<0.03$ and $k>11000$ instead (see the proof of the theorem).

Note that the local algorithm runs in time independent of the size of the graph and is sublinear time (in the size of the input graph, denoted as $n$) when the size of the optimal set is sufficiently smaller than $n$ and the approximation ratio of the algorithm is almost optimal in that it almost matches the guarantee of Trevisan's spectral inequality for the B-ratio. This algorithm also improves the work of the second author~\cite{Pen12:local}, who gave a local algorithm for B-ratio guaranteeing that the output set has volume at most $O(k^2)$ and B-ratio at most $O(\sqrt{\theta})$.

Finally, as an application of the algorithm \verb|SwpDB|, we give a spectral characterization of the small dense bipartite subgraph by relating the $k$th largest eigenvalue of the Laplacian of $G$ to the B-ratio of some subsets with volume at most $O(2|E|/k^{1-\epsilon})$. More specifically, if we let $\lambda_0\leq \lambda_1\leq\cdots\leq\lambda_{n-1}$ be the eigenvalues of the Laplacian matrix $\LL$ of the graph $G$, and define the \textit{dense bipartite profile} of the graph as
$$\beta(k):=\min_{\substack{L,R: L\cap R=\emptyset\\ \vol(L\cup R)\leq k}}\beta(L,R),$$
then our spectral characterization implies that
\begin{eqnarray}
\beta(\vol(G)/k^{1-\epsilon})\leq O(\sqrt{(2-\lambda_{n-k})\log_kn}). ~\label{eqn:dbspectra}
\end{eqnarray}

\subsection{Our Techniques}
Our approximation algorithm is based on Trevisan's spectral characterization of the B-ratio $\beta(G)$ of the graph, which is the minimum B-ratio of all possible disjoint vertex subsets $L,R$, that is, $\beta(G)=\beta(\vol(G))$.
Recall that $\lambda_0\leq \lambda_1\leq\cdots\leq\lambda_{n-1}$ are the eigenvalues of $\LL$. Instead of working directly on $\LL$, we study a closely related matrix $M$, which we call the \textit{quasi-Laplacian}, that has the same spectra as $\LL$. Let $\vv_0,\vv_1,\cdots,\vv_{n-1}$ be the corresponding eigenvectors of $M$. Trevisan showed that if $\lambda_{n-1}\geq 2-2\theta$, then by a simple \textit{sweep process} over the largest eigenvector $\vv_{n-1}$, we can find a pair of subsets $X,Y$ with B-ratio at most $2\sqrt{\theta}$. On the other hand, it is well known that the largest eigenvector $\vv_{n-1}$ can be computed fast by the power method, which starts with a ``good'' vector $\q_0$ and iteratively multiplies it by $M$ to obtain $\q_t$, and outputs $\q_T$ by choosing proper $T$. Hence, the power method combined with the sweep process can find a subset with B-ratio close to $\beta(G)$. However, such a method does not give a useful volume bound on the output set.

In order to find \textit{small} dense bipartite subgraphs, we sweep each of the vector $\q_t$ and characterize $\q_t$ in terms of the minimum of B-ratio of all the small sweep sets (the sets found in the sweep process) encountered in all the $T$ iterations. This is done by a \textit{potential function $J(\p,x)$}, which has a nice convergence property that for general vector $\p$ and some $x$, $J(\p M,x)$ can be bounded by a function of $J(\p,x')$ and the B-ratio of the some sweep set~(see Lemma~\ref{lemma:converge}). Using this property, we show inductively that if we choose $\q_0=\chi_v$ for some vertex $v\in V$, $J(\q_t,x)$ can be upper bounded by a function in $t,K$ and the minimum B-ratio of all the sweep sets of volume at most $K$ for all $t\leq T$~(see Lemma~\ref{lemma:upper}). On the other hand, if the graph contains a small dense bipartite subgraph $L,R$ of volume at most $k$, we prove that the potential function also increases quickly in terms of $t$ and $\beta(L,R)$~(see Lemma~\ref{lemma:lower}), which will lead to the conclusion that at least one of the sweep set with volume at most $K$ has B-ratio ``close'' to $\beta(L,R)$ by choosing proper $K$ in terms of $k$ and the starting vertex $v$.

To give local algorithms that run in time independent of the size of the graph, we need to keep the support size of the vectors $\q_t$ small in each iteration. This is done by a truncation operation of a vector that only keeps the elements with large absolute vector value. Let $\qq_0=\chi_v$ and iteratively define $\qq_t$ to be the truncation vector of $\qq_{t-1}M$. We show that both upper bound and lower bound on $J(\q_t,x)$ still approximately holds for $J(\qq_t,x)$, and thus prove the correctness of our local algorithm which sweeps all the vectors $\qq_t$ instead of $\q_t$.

Finally, we use a simple trace lower bound to serve as the lower bound for $J(\q_t,x)$ and obtain the spectral characterization of the dense bipartite profile.

\subsection{Related Works}
Our work is closely related to a line of research on the \textit{conductance} of a set $S$, which is defined as
$$\phi(S)=\frac{e(S,\bar{S})}{\min\{\vol(S),\vol(\bar{S})\}}.$$
Kannan, Vempala and Veta~\cite{KVV04:clustering} suggest using the conductance as a measure of a set being a general community (in contrast of cyber-communities), since the smaller the conductance it, the more likely that the set is a community with dense intra-connections and sparse inter-connections. Spielman and Teng give the first local clustering algorithm to find subgraphs with small conductance by using the truncated random walk~\cite{ST04:linear,ST08:local}. Anderson, Chung and Lang~\cite{ACL06:localpr}, Anderson and Peres~\cite{AP09:evolvingset}, Kwok and Lau~\cite{KL12:rw} and Oveis Gharan and Trevisan~\cite{OT12:clustering} then give local algorithms for conductance with better approximation ratio or running time. All their local algorithms are based on the Cheeger's inequality that relates the second smallest eigenvalue of $\LL$ to the conductance~\cite{AM85:iso,Alo86:expander,SJ89:counting}, similar to our algorithms which depend on Trevisan's spectral inequality that relates the largest eigenvalue of $\LL$ to the B-ratio.

Some works studied the \textit{small set expander}, that is, to find small set with small conductance. This problem is of interest not only for the reason that it has applications in finding small communities in social networks, but also that it is closely related to the unique games conjecture~\cite{RS10:expansion}. Arora, Barak and Steurer~\cite{ABS10:subexp}, Louis, Raghavendra, Tetali and Vempala~\cite{LRTV12:sparse}, Lee, Oveis Gharan and Trevisan\cite{LOT12:spectral}, Kwok and Lau~\cite{KL12:rw}, Oveis Gharan and Trevisan~\cite{OT12:clustering} and O'Donnell and Witmer~\cite{OW12:sse} have given spectra based approximation algorithms and characterizations of this problem.
The latter three works have recently shown that for any $0<\epsilon<1$, $$\phi(\vol(G)/k^{1-\epsilon})\leq O(\sqrt{\lambda_{k}\log_kn}),$$
where $\phi(k)$ is the \textit{expansion profile} of $G$ and is defined as 
$$\phi(k):=\min_{S:\vol(S)\leq k}\phi(S).$$
Their spectral characterization of the expansion profile as well as the Cheeger's inequality all use the first $k$ smallest eigenvalues of $\LL$, which is comparable to our characterization of the dense bipartite profile by the $k$th largest eigenvalue of $\LL$ as given in inequality~(\ref{eqn:dbspectra}).

Feige, Kortsarz and Peleg~\cite{FKP01:dense}, and Bhaskara et al.~\cite{BCCFV10:density} give non-local algorithms for the densest $k$-subgraphs. Charikar~\cite{Cha00:greedy}, Andersen~\cite{And10:local}, Andersen and Chellapilla~\cite{AC09:dense}, Khuller and Saha\cite{KS09:dense} studied approximation (or local) algorithms for dense subgraphs based on other measures. Arora et al.~\cite{AGSS12:overlap} and
Balcan et al.~\cite{BBBCT13:community} investigate the problem of finding overlapping communities in networks.

\section{Preliminaries}
Let $G=(V,E)$ be an undirected weighted graph and let $n:=|V|$ and $m:=|E|$. Let $d(v)$ denote the weighted degree of vertex $v$. For any vertex subset $S\subseteq V$, let $\bar{S}:=V\backslash S$ denote the complementary of $S$. Let $e(S)$ be the number of edges in $S$ and define the volume of $S$ to be the sum of degree of vertices in $S$, that is $\vol(S):=\sum_{v\in S}d(v)$. Let $\vol(G):=\vol(V)=2m$. For any two subsets $L,R\subseteq V$, let $e(L,R)$ denote the number of edges between $L$ and $R$. For two disjoint subsets $L,R$, that is, $L\cap R=\emptyset$, we will use $U=(L,R)$ to denote subgraph induced on $L$ and $R$, which is also called the pair subgraph. We will also use $U$ to denote $L\cup R$. Given $U=(L,R)$, the \textit{bipartiteness ratio (or B-ratio) of $U$} is defined as
\begin{eqnarray}
\beta(L,R):=\frac{2e(L)+2e(R)+e(U,\bar{U})}{\vol(U)}.\nonumber
\end{eqnarray}

The \textit{B-ratio of a set $S$} is defined to be the minimum value of $\beta(L,R)$ over all the possible partitions $L,R$ of $S$, that is,
$$\beta(S):=\min_{(L,R)\,\textrm{partition of $S$}}\beta(L,R).$$

The B-ratio of the graph $G$ is defined as
$$\beta(G):=\min_{S\subseteq V}\beta(S).$$

We are interested in finding small subgraphs with small B-ratio. In the following, we use lower bold letters to denote vectors. Unless otherwise specified, a vector $\p$ is considered to be a row vector, and $\p^T$ is its transpose. For a vector $\p$ on vertices, let $\supp(\p)$ denote the support of $\p$, that is, the set of vertices on which the $\p$ value is nonzero. Let $\norm{\p}_1$ and $\norm{\p}_2$ denote the $L^1$ and $L^2$ norm of $\p$, respectively. Let $\abs{\p}$ denotes its absolute vector, that is, $\abs{\p}(v):=\abs{\p(v)}$.
For a vector $\p$ and a vertex subset $S$, let $\p(S):=\sum_{v\in S}\p(v)$. For $L,R$, let $\p(L,-R):=\sum_{v\in L}p(v)-\sum_{v\in R}p(v)$. One useful observation is that for any partition $(L,R)$ of $S$, $\p(L,-R)\leq \abs{\p}(S)$. Also note that  there exists a partition $(L_0,R_0)$ of $S$ such that $\p(L_0,-R_0)=\abs{\p}(S)$. Actually, $L_0$ is the set of vertices with positive $\p$ value and $R_0$ is the set of the remaining vertices, that is, $L_0=\{v\in S:\p(v)>0\}$ and $R_0=\{v\in S:\p(v)\leq 0\}$.

For any vertex $v$, let $\chi_v$ denote the indicator vector on $v$.
Now let $A$ denote the adjacency matrix of the graph such that $A_{uv}$ is the weight of edge $u\sim v$. Let $D$ denote the diagonal degree matrix. Define the \textit{random walk matrix $W$}, the \textit{(normalized) Laplacian matrix $\LL$} and the \textit{quasi-Laplacian matrix $M$} of the graph $G$ as
$$W:=D^{-1}A,\LL:=I-D^{-1/2}AD^{-1/2},M:=I-D^{-1}A.$$
It is well known that these three matrices are closely related. In particular, for regular graphs, $\LL$ and $M$ are the same; and if we let $\lambda_0\leq \lambda_1\leq\cdots\leq \lambda_{n-1}$ be the eigenvalues of $\LL$, then $\{1-\lambda_i\}_{0\leq i\leq n-1}$ and $\{\lambda_i\}_{0\leq i\leq n-1}$ are the eigenvalues of $W$ and $M$, respectively. Note that for general graphs, though the eigenvalues of $\LL$ and $M$ are the same, the corresponding eigenvectors might be different. In this paper, we will mainly use the quasi-Laplacian $M$ to give both algorithms and spectral characterization for the small dense bipartite subgraph problem. If we let $\vv_0,\vv_1,\cdots,\vv_{n-1}$ be the corresponding left eigenvectors of $M$, then we have the following spectral inequality given by Trevisan~\cite{Tre09:maxcut} (see also~\cite{Pen12:local} as Trevisan did not explicitly state his result in terms of the matrix $M$).

\begin{lemma}[\cite{Tre09:maxcut}]~\label{lemma:tre}
Let $\beta(G)$, $\lambda_{n-1}$ and $\vv_{n-1}$ defined as above. We have that $$\beta(G)\leq \sqrt{2(2-\lambda_{n-1})}.$$ Furthermore, a pair subgraph $(X,Y)$ with B-ratio at most $\sqrt{2(2-\lambda_{n-1})}$ can be found by a sweep process over $\vv_{n-1}$.
\end{lemma}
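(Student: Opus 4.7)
The plan is to establish a Cheeger-type inequality for the largest eigenvalue of $M$ via a randomized sweep argument. The starting point is the row-sided variational characterization
\[
2-\lambda_{n-1} \;=\; \min_{f\neq 0}\;\frac{\sum_{(u,v)\in E}(f(u)+f(v))^2}{\sum_{v}d(v)f(v)^2},
\]
which follows from the usual Rayleigh quotient for $\lambda_{n-1}$ as the largest eigenvalue of $\LL$ by the substitution $g=D^{1/2}f$. Since $M$ and $\LL$ share spectra and the left eigenvector $\vv_{n-1}$ of $M$ is related to an eigenvector of $\LL$ by the degree rescaling $f=D^{-1}\vv_{n-1}^T$, I would take such a minimizing $f$ and use it as the vector on which to sweep.

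Next I would normalize $f$ so that $\max_v|f(v)|=1$ and, for each threshold $t\in[0,1]$, consider the candidate pair
\[
L_t := \{v\in V: f(v)\geq t\},\qquad R_t := \{v\in V: f(v)\leq -t\}.
\]
The ``sweep process over $\vv_{n-1}$'' scans through these pairs in order of $|f(v)|$ and returns the one achieving the smallest B-ratio. For the analysis I would draw $t\in[0,1]$ at random with density $2t$, so that $v\in L_t\cup R_t$ with probability $f(v)^2$, giving $\mathbb{E}[\vol(L_t\cup R_t)]=\sum_v d(v)f(v)^2$.

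The core of the proof is a per-edge bound on the expected numerator $N(t)=2e(L_t)+2e(R_t)+e(L_t\cup R_t,\overline{L_t\cup R_t})$. A short case analysis on the signs of $f(u),f(v)$ shows that each edge contributes in expectation at most $(|f(u)|+|f(v)|)\cdot|f(u)+f(v)|$: in the two same-sign cases the contribution is $f(u)^2+f(v)^2\leq(f(u)+f(v))^2$, while in the opposite-sign case it is exactly $|f(u)^2-f(v)^2|=(|f(u)|+|f(v)|)\,|f(u)+f(v)|$. Summing over edges, applying Cauchy--Schwarz, and using $\sum_{(u,v)\in E}(|f(u)|+|f(v)|)^2\leq 2\sum_v d(v)f(v)^2$ yields
\[
\mathbb{E}[N(t)] \;\leq\; \sqrt{2\textstyle\sum_v d(v)f(v)^2}\cdot\sqrt{\textstyle\sum_{(u,v)\in E}(f(u)+f(v))^2} \;=\; \sqrt{2(2-\lambda_{n-1})}\cdot \mathbb{E}[\vol(L_t\cup R_t)].
\]

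Finally, a standard averaging step closes the argument: if $\beta(L_t,R_t)>\sqrt{2(2-\lambda_{n-1})}$ for every threshold $t$ with $\vol(L_t\cup R_t)>0$, then integrating against the density $2t$ would contradict the inequality just displayed. Hence some threshold must deliver a pair $(X,Y)=(L_t,R_t)$ with $\beta(X,Y)\leq\sqrt{2(2-\lambda_{n-1})}$, which in turn witnesses $\beta(G)\leq\sqrt{2(2-\lambda_{n-1})}$. I expect the main technical obstacle to be the clean verification of the per-edge envelope $(|f(u)|+|f(v)|)\,|f(u)+f(v)|$ in the opposite-sign case, where one must carefully track which thresholds put exactly one endpoint into $L_t\cup R_t$; once that identity is in hand, the Cauchy--Schwarz step and the averaging conclusion are routine.
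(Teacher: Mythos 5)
Your proposal is correct and is essentially Trevisan's original argument for this inequality --- the paper itself does not reprove Lemma~\ref{lemma:tre} but cites it from \cite{Tre09:maxcut}, where exactly this randomized threshold rounding (choosing $t$ with density $2t$ so a vertex survives with probability $f(v)^2$, the per-edge case analysis, Cauchy--Schwarz, and averaging) is used, with $f=\vv_{n-1}D^{-1}$ matching the paper's sweep order $|\p(v)|/d(v)$. The only point worth stating explicitly in a write-up is that the threshold pairs $(L_t,R_t)$ coincide (up to ties) with the sweep sets $S_i(\vv_{n-1})$ of Definition~\ref{def:sweep}, so the minimum over sweep sets is at most the bound you derive.
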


The sweep process mentioned above is defined as follows.
\begin{definition}(Sweep process)~\label{def:sweep}
Given a vector $\p$, the \textit{sweep (process)} over $\p$ is defined by performing the following operations:
\begin{enumerate}
\item Order the vertices so that
$$\frac{|\p(v_1)|}{d(v_1)}\geq \frac{|\p(v_2)|}{d(v_2)}\geq\cdots\geq \frac{|\p(v_n)|}{d(v_n)}.$$
\item For each $i\leq n$, let $L_i(\p):=\{v_j:\p(v_j)>0 \, \textrm{and}\, j\leq i\}$, $R_i(\p):=\{v_j:\p(v_j)\leq 0\, \textrm{and}\,j\leq i\}$ and $S_i(\p):=(L_i(\p),R_i(\p))$, which we call the sweep set of the first $i$ vertices. Compute the B-ratio of $S_i(\p)$.
\end{enumerate}
\end{definition}

In Trevisan's inequality, to find the subgraph with small B-ratio, we just need to output the sweep set with the minimum B-ratio over the all the sweep sets. Trevisan also showed the tightness (within constant factors) of this inequality in the sense that there exist graphs such that the two quantities in both hands of the inequality in Lemma~\ref{lemma:tre} are asymptotically the same. The sweep process as well as Trevisan's inequality are the bases of our algorithms for the small dense bipartite-like subgraphs.

We will use the following truncation operator to design local algorithms.
\begin{definition}(Truncation operator)
Given a vector $\p$ and a nonnegative real number $\xi$, we define the $\xi$-truncated vector of $\p$ to be:
\begin{displaymath}
[\p]_\xi(u)=\left\{\begin{array}{ll}
\p(u) & \textrm{if $|\p(u)|\geq \xi d(u)$,} \\
0 & \textrm{otherwise.}
\end{array}\right.
\end{displaymath}
\end{definition}

The following facts are straightforward and will be useful in the remaining proofs.

\begin{fact}~\label{fact:truncation}
For any vector $\p$ and $0\leq\xi\leq 1$,
\begin{enumerate}
\item $\abs{[\p]_\xi}\leq \abs{\p}\leq \abs{[\p]_\xi}+\xi\dd$, where $\dd$ is the degree vector.
\item $\vol(\supp([\p]_\xi))=\sum_{v\in \supp([\p]_\xi)}d(v)\leq \sum_{v\in \supp([\p]_\xi)}|\p(v)|/\xi\leq \norm{\p}_1/\xi$.
\end{enumerate}
\end{fact}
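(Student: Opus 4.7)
The plan is to prove each part by directly unwinding the pointwise definition of $[\p]_\xi$, which partitions $V$ into a ``kept'' set $K := \{u : |\p(u)| \geq \xi d(u)\}$, on which $[\p]_\xi$ agrees with $\p$, and its complement $V \setminus K$, on which $[\p]_\xi$ vanishes. Both inequalities then reduce to elementary case splits over this partition.

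For part~1, I would verify the two bounds coordinate-by-coordinate. The left inequality $\abs{[\p]_\xi} \leq \abs{\p}$ is immediate since $[\p]_\xi(u)$ is either $\p(u)$ or $0$, so $|[\p]_\xi(u)| \in \{|\p(u)|, 0\}$ and in both cases is at most $|\p(u)|$. For the right inequality $\abs{\p} \leq \abs{[\p]_\xi} + \xi\dd$, I split again: on $K$ it becomes $|\p(u)| \leq |\p(u)| + \xi d(u)$, which holds with slack $\xi d(u) \geq 0$; on $V \setminus K$ the defining condition $|\p(u)| < \xi d(u)$ gives $|\p(u)| \leq 0 + \xi d(u) = |[\p]_\xi(u)| + \xi d(u)$. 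Assembling the two cases yields the vector inequality.

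For part~2, the equality $\vol(\supp([\p]_\xi)) = \sum_{v \in \supp([\p]_\xi)} d(v)$ is just the definition of volume. The first inequality uses that every $v \in \supp([\p]_\xi)$ satisfies $|\p(v)| \geq \xi d(v)$, which rearranges to $d(v) \leq |\p(v)|/\xi$; summing over the support gives the bound. The last inequality follows by extending the sum to all of $V$, adding only nonnegative terms, and recognizing $\sum_{v \in V} |\p(v)| = \norm{\p}_1$.

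Since the entire argument is a direct consequence of the pointwise truncation rule, there is essentially no obstacle to clear; the only point requiring minor care is matching up the weak inequality in the support condition ($\geq$ rather than $>$) when handling the boundary case $|\p(u)| = \xi d(u)$, but this slots neatly into the ``kept'' branch of the case split and causes no difficulty.
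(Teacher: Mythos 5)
Your proof is correct and follows exactly the direct pointwise verification that the paper intends, since it states this fact without proof as an immediate consequence of the definition of the truncation operator. The only (cosmetic) caveat, inherited from the statement itself rather than your argument, is that part~2 implicitly requires $\xi>0$ for the quantity $|\p(v)|/\xi$ to make sense.
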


\section{Approximation Algorithm for the Small Dense Bipartite-like Subgraphs}
~\label{sec:profile}
In this section, we first give the description of our approximation algorithm for the small dense bipartite-like subgraph, the main subroutine of which is the sweep process over a set of vectors $\chi_vM^t$. We then introduce a potential function $J(\p,x)$ and give both upper bound and lower bound of the potential function $J(\chi_vM^t,x)$ under certain conditions, using which we are able to show the correctness of our algorithm and thus prove Theorem~\ref{thm:smallBR}.

Now we describe our algorithm \verb|SwpDB| (short for ``sweep for dense bipartite'') for finding the small dense bipartite-like subgraphs.
\begin{center}
\begin{tabular}{|p{\textwidth}|}
\hline
\verb|SwpDB|$(G,k,\theta,\epsilon)$\\
\hline

Input: A graph $G$, a target volume $k>4$, a target B-ratio $\theta$, an error parameter $0<\epsilon<1/2$.

Output: A subgraph $(X,Y)$.
\begin{enumerate}
\item Let $T=\frac{\epsilon\ln 2k}{2\theta}$. Let $K=2k^{1+\epsilon}$.
\item Sweep over all vectors $\chi_vM^t$, for each vertex $v\in V$ and $t\leq T$, to obtain a family $\mathcal{F}$ of sweep sets with volume at most $K$.
\item Output the subgraph $(X,Y)$ with the smallest B-ratio ratio among all sets in $\mathcal{F}$.
\end{enumerate}\\
\hline
\end{tabular}
\end{center}

\subsection{A Potential Function}
We define a potential function $J:[0,2m]\rightarrow \mathbb{R}^+$:
$$J({\bf p},x):=\max_{\substack{\w\in [0,1]^n\\ \sum_{v\in V}\w(v)d(v)=x}}\sum_{v\in V}|{\bf p}(v)|\w(v).$$
Note that our potential function is similar to a potential function for bounding the convergence of $\p(\frac{I+W}{2})^t$ in terms of the conductance given by Lov{\'a}sz and Simonovits~\cite{LS90:mixing,LS93:random}. Here we will use $J({\bf p},x)$ to bound the convergence of $\q M^t$ in terms of the B-ratio of the sweep sets.

There are two useful ways to see this potential function.
\begin{enumerate}
\item We view each edge $u\sim v\in E$ as two directed edges $u\rightarrow v$ and $v\rightarrow u$. For each directed edge $e=u\rightarrow v$, let $\p(e)=\frac{\p(u)}{d(u)}$. Order the edges so that
    $$|\p(e_1)|\geq |\p(e_2)|\geq\cdots \geq |\p(e_{2m})|.$$

Now we can see that for an integer $x$, $J(\p,x)=\sum_{j=1}^x|\p(e_j)|$. For other fractional $x=\lfloor x\rfloor+r$, $J(\p,x) =(1-r)J(\p,\lfloor x\rfloor)+rJ(\p,\lceil x\rceil)$.

Also it is easy to see that for any directed edge set $F$, $|\p|(F):=\sum_{e\in F}|\p|(e)\leq J(\p,|F|)$, since the former is a sum of $|\p|$ values of one specific set of edges with $|F|$ edges and the latter is the maximum over all such possible edge sets.
\item Another way to view the potential function is to use the sweep process over $\p$ as in Definition~\ref{def:sweep}. By the definitions of the potential function and the sweep process, we have the following observations.
\begin{enumerate}
\item For $x=\vol(S_i(\p))$, $J(\p,x)=\sum_{j=1}^i|\p(v_j)|=|\p|(S_i(\p))
=\p(L_i(\p),-R_i(\p))$. And $J(\p,x)$ is linear in other values of $x$.
\item For any set $S$, $\abs{\p}(S)\leq J(\p,\vol(S))$, since the former is the sum of $|\p(v)|/d(v)$ values of vertices in $S$ and the latter is the maximum sum over all sets with $|S|$ vertices;
\end{enumerate}
\end{enumerate}

From both views, we can easily see that the potential function is a non-decreasing and concave function of $x$.

\subsection{An Upper Bound for the Potential Function}
Now we upper bound $J(\p M,x)$ in terms of $J(\p,x')$ and the B-ratio of the sweep set of $\p M$.
\begin{lemma}[Convergence Lemma]~\label{lemma:converge}
For an arbitrary vector $\p$ on vertices, if $\beta(L_i(\p),R_i(\p))\geq \Theta$, then for $x=\vol(S_i(\p))$, we have
$$J(\p M,x)\leq J(\p, x+\Theta x)+J(\p, x-\Theta x).$$
\end{lemma}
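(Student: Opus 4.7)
The plan is to prove this convergence inequality by adapting the Lovász--Simonovits argument (which bounds the evolution of $J(\p P, x)$ under the lazy random walk $P = (I+W)/2$ in terms of conductance) to the present setting of the quasi-Laplacian $M = I - W$ and the bipartiteness ratio. The key tool is an edge-wise expansion of $\p M(A, -B)$, combined with the B-ratio hypothesis on the sweep set to supply enough slack for a split into two separate potential values.

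I would first fix a signed partition $(A, B)$ of volume $x$ that achieves $J(\p M, x) = \p M(A, -B)$, and use the identity $(\p M)(v) = \sum_{u\sim v}(\p(v)/d(v) - \p(u)/d(u))$ to rewrite
\[
\p M(A, -B) \;=\; \sum_{\{u,v\} \in E}(\sigma(u) - \sigma(v))\!\left(\tfrac{\p(u)}{d(u)} - \tfrac{\p(v)}{d(v)}\right),\quad \sigma := \mathbf{1}_A - \mathbf{1}_B.
\]
Regrouping by the source vertex of each directed edge yields $\p M(A, -B) = \sum_u \alpha(u)\p(u)$ with coefficients $\alpha(u) = \sigma(u) - (e(u, A) - e(u, B))/d(u)$. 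A direct case-check shows $|\alpha(u)| \leq 2$ for $u \in A \cup B$, $|\alpha(u)| \leq 1$ elsewhere, and $\sum_u |\alpha(u)|d(u) \leq 2x$.

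Next I would use the edge interpretation of $J$: for integer $y$, $J(\p, y)$ is the sum of the top $y$ values of $|\p(u)|/d(u)$ weighted by $d(u)$, with per-vertex weight capped at $1$. Under the sweep hypothesis the top $x$ directed edges of $\p$ are sourced in $S_i(\p) = L \cup R$, and the assumption $\beta(L, R) \geq \Theta$ says at least $\Theta x$ of those top directed edges are internal to $L$, internal to $R$, or leaving $S_i$ — call these the ``bad'' edges — whereas at most $(1-\Theta)x$ cross between $L$ and $R$. I would then construct two nonnegative weight vectors $\alpha^+, \alpha^- \in [0,1]^V$ satisfying $\alpha^+ + \alpha^- \geq |\alpha|$ pointwise, with $\sum_u \alpha^+(u)d(u) \leq x + \Theta x$ and $\sum_u \alpha^-(u)d(u) \leq x - \Theta x$. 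The $\Theta x$ slack supplied by the bad edges is exactly what allows shifting one unit of edge-capacity off the $\alpha^-$ side (tightening its total mass to $x - \Theta x$) onto the $\alpha^+$ side (loosening to $x + \Theta x$) while respecting the cap $\alpha^\pm(u) \leq 1$. Once the split is in hand,
\[
\p M(A, -B) \;\leq\; \sum_u \alpha^+(u)|\p(u)| + \sum_u \alpha^-(u)|\p(u)| \;\leq\; J(\p, x + \Theta x) + J(\p, x - \Theta x)
\]
by the definition of $J$, which proves the lemma.

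The main obstacle is the combinatorial feasibility of the decomposition $\alpha = \alpha^+ + \alpha^-$: one must verify that the $\Theta x$ bad-edge count coming from the sweep partition $(L, R)$ translates into a per-vertex-respecting split on the support of $\alpha$ (which is controlled by the possibly different partition $(A, B)$). This is the technical heart of the argument, and it is where the sweep ordering of $\p$ and the B-ratio structure of $(L, R)$ meet the vertex-wise $[0,1]$ constraint of $J$.
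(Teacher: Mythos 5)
Your two algebraic identities are correct (the symmetric edge form of $\p M(A,-B)$ via the combinatorial Laplacian, and the per-vertex form with $\alpha(u)=\sigma(u)-(e(u,A)-e(u,B))/d(u)$), and the overall plan — bound $\p M(A,-B)$ by two potential values via a capped weight decomposition — is sound in spirit. But the proof is not complete: the decomposition $\alpha^{+}+\alpha^{-}\geq|\alpha|$ that you yourself call the technical heart is never constructed, and the mechanism you propose for it — counting ``bad'' directed edges of the sweep partition $(L,R)$ of $\p$ among the top $x$ edges of $\p$ — is aimed at the wrong partition. The coefficients $\alpha$ are determined entirely by the signed pair $(A,B)$ realizing $J(\p M,x)$, so a bound on $\beta(L,R)$ for a possibly different pair gives no control over $\alpha$, and you do not supply a transfer argument. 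The way out is to read the hypothesis as the lemma is actually used in Lemma~\ref{lemma:upper}: there the B-ratio condition is imposed on the sweep sets of $\q_t=\q_{t-1}M$, i.e.\ on the sweep sets of $\p M$, which are exactly the pairs realizing $J(\p M,x)$ (this is also what the paper's proof needs when it writes $J(\p M,x)=\p M(L_i,-R_i)$). With that reading the mismatch you worry about disappears: the maximizer $(A,B)$ and the pair with $\beta\geq\Theta$ are the same set.

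Once $(A,B)$ is both the maximizer and the pair with $\beta(A,B)\geq\Theta$, your split can be made completely explicit and the argument closes. For $u\in A$ one has $|\alpha(u)|d(u)=2e(u,B)+e(u,\bar U)$, so take $\alpha^{-}(u)=e(u,B)/d(u)$ and $\alpha^{+}(u)=(e(u,B)+e(u,\bar U))/d(u)$; symmetrically for $u\in B$ with the roles of $A,B$ exchanged; and for $u\notin U$ put $\alpha^{+}(u)=|e(u,A)-e(u,B)|/d(u)\leq 1$, $\alpha^{-}(u)=0$. Then $\sum_u\alpha^{-}(u)d(u)=2e(A,B)=x\bigl(1-\beta(A,B)\bigr)$ and $\sum_u\alpha^{+}(u)d(u)\leq 2e(A,B)+2e(U,\bar U)\leq x\bigl(1+\beta(A,B)\bigr)$, giving $\p M(A,-B)\leq J\bigl(\p,x(1+\beta(A,B))\bigr)+J\bigl(\p,x(1-\beta(A,B))\bigr)$, after which monotonicity and concavity of $J(\p,\cdot)$ convert $\beta(A,B)\geq\Theta$ into the stated bound. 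Note that this vertex-wise split is precisely the paper's grouping of directed edges into $(A\rightarrow\bar A)\cup(B\rightarrow\bar B)\cup(\bar U\rightarrow U)$ (size $2e(A,B)+2e(U,\bar U)$) and $(A\rightarrow B)\cup(B\rightarrow A)$ (size $2e(A,B)$), so the completed argument is the paper's proof in per-vertex rather than per-edge bookkeeping; as submitted, however, the feasibility of the split is asserted rather than proved, and is tied to the wrong set.
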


We remark that the proof heavily depends on the definition and combinatorial property of the B-ratio of a set. Though the form is similar to the corresponding characterization of conductance given by Lov{\'a}sz and Simonovits, the two proofs are very different.
\begin{proof}[Proof of Lemma~\ref{lemma:converge}]
We show that for any $U=(L,R)$, we have that
\begin{eqnarray}
\p M(L,-R)\leq J(\p, \vol(U)(1+\beta(L,R)))+J(\p, \vol(U)(1-\beta(L,R))). \label{eqn:potential}
\end{eqnarray}
Then the lemma follows by letting $U=S_i(\p)=(L_i(\p),R_i(\p))$ and that
\begin{eqnarray*}
J(\p M,x)&=&\p M(L_i(\p),-R_i(\p))\\
&\leq& J(\p,x(1+\beta(L_i(\p),R_i(\p))))+J(\p,x(1-\beta(L_i(\p),R_i(\p))))\\
&\leq&J(\p,x(1+\Theta))+J(\p,x(1-\Theta)),
\end{eqnarray*}
where the last inequality follows from the concavity of $J(\p,x)$.

Now we show inequality~(\ref{eqn:potential}). Let $L_1\rightarrow L_2$ denote the set of direct edges from $L_1$ to $L_2$ for two arbitrary vertex sets $L_1$ and $L_2$. We have that
\begin{eqnarray*}
\p M(L,-R)&=&\p(I-D^{-1}A)(L,-R)\\
&=&\p(L)-\p(R)-\p D^{-1}A(L)+\p D^{-1}A(R)\\
&=&\sum_{v\in L}\sum_{v\rightarrow u}\frac{\p(v)}{d(v)}-\sum_{v\in R}\sum_{v\rightarrow u}\frac{\p(v)}{d(v)}-\sum_{v\in L}\sum_{u\rightarrow v}\frac{\p(u)}{d(u)}+\sum_{v\in R}\sum_{u\rightarrow v}\frac{\p(u)}{d(u)}\\
&=&\sum_{e\in L\rightarrow \bar{L}}\p(e)-\sum_{e\in R\rightarrow \bar{R}}\p(e)-
\sum_{e\in R\rightarrow L}\p(e)
-\sum_{e\in \bar{U}\rightarrow L}\p(e)\\
&&+
\sum_{e\in L\rightarrow R}\p(e)+
\sum_{e\in \bar{U}\rightarrow R}\p(e)\\
&\leq& \sum_{e\in (L\rightarrow \bar{L})\cup (R\rightarrow \bar{R})\cup (\bar{U}\rightarrow U)}|\p(e)| +
\sum_{e\in (L\rightarrow R) \cup (R\rightarrow L)}|\p(e)|\\
&\leq& J(\p,2e(L,R)+2e(U,\bar{U}))
+J(\p,2e(L,R))\\
&\leq & J(\p, \vol(U)+2e(L)+2e(R)+e(U,\bar{U}))\\
&&+J(\p, \vol(U)-2e(L)-2e(R)-e(U,\bar{U})),
\end{eqnarray*}
where the second to last inequality follows from the fact that $|(L\rightarrow \bar{L})\cup (R\rightarrow \bar{R})\cup (\bar{U}\rightarrow U)|=2e(L,R)+2e(U,\bar{U})$, that $|(L\rightarrow R) \cup (R\rightarrow L)|=2e(L,R)$ and that $|\p|(F)\leq J(\p,|F|)$ for an arbitrary (directed) edge set $F$; and the last inequality follows from that $J(\p,x)$ is non-decreasing.
\end{proof}

Now we can use the convergence lemma to upper bound $J(\chi_vM^t,x)$.

\begin{lemma}~\label{lemma:upper}
For any vertex $v\in V$, let $\q_t=\chi_vM^t$, if for all $t\leq T$ and all sweep sets $S_i(\q_t)=(L_i(\q_t),R_i(\q_t))$ of volume at most $K$ have B-ratio at least $\Theta$, that is, $\beta(L_i(\q_t),R_i(\q_t))\geq \Theta$, then for any $t\leq T$, we have $J(q_t,x)\leq \frac{2^tx}{K}+\sqrt{\frac{x}{d(v)}}\Big(2-\frac{\Theta^2}{4}\Big)^t.$
\end{lemma}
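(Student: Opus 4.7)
The plan is to induct on $t$, driving the inductive step with the key inequality (\ref{eqn:potential}) from the Convergence Lemma and an elementary numerical estimate, while a separate $L^1$--norm bound covers the regime $x>K$ where the sweep-set hypothesis is vacuous. For the base case $t=0$, $\q_0=\chi_v$ puts unit mass only at $v$, so $J(\chi_v,x)=\max_{\w}\w(v)$ subject to $\w\in[0,1]^n$ and $\sum_u\w(u)d(u)=x$; this is $\min\{1,x/d(v)\}\le\sqrt{x/d(v)}$, which is already bounded by $x/K+\sqrt{x/d(v)}$.

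For the inductive step, fix $t+1\le T$ and $x\le K$. Since $J(\q_{t+1},\cdot)$ is concave and piecewise linear with breakpoints at the sweep-set volumes $\vol(S_i(\q_{t+1}))$, it suffices to verify the bound at such a breakpoint $x=\vol(S_i(\q_{t+1}))$ and interpolate. At such a breakpoint, $J(\q_{t+1},x)=|\q_{t+1}|(S_i(\q_{t+1}))=\q_{t+1}(L_i(\q_{t+1}),-R_i(\q_{t+1}))$. Applying inequality (\ref{eqn:potential}) to the pair $(L_i(\q_{t+1}),R_i(\q_{t+1}))$, together with the hypothesis that its B-ratio is at least $\Theta$ (applicable since $t+1\le T$ and the sweep-set volume is at most $K$), gives
$$J(\q_{t+1},x)\le J(\q_t,x(1+\Theta))+J(\q_t,x(1-\Theta)).$$
Plugging in the induction hypothesis for both terms and adding, the linear parts sum to $2^{t+1}x/K$, while the square-root parts collapse to $\sqrt{x/d(v)}\,(2-\Theta^2/4)^t\bigl(\sqrt{1+\Theta}+\sqrt{1-\Theta}\bigr)$. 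It then remains to verify the numerical inequality
$$\sqrt{1+\Theta}+\sqrt{1-\Theta}\le 2-\Theta^2/4,$$
which follows by two applications of $\sqrt{1-u}\le 1-u/2$: squaring gives $(\sqrt{1+\Theta}+\sqrt{1-\Theta})^2=2+2\sqrt{1-\Theta^2}\le 4-\Theta^2$, so the sum is at most $2\sqrt{1-\Theta^2/4}\le 2-\Theta^2/4$.

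For $x>K$ the sweep hypothesis gives no information, but the target upper bound already exceeds $2^{t+1}$, so it suffices to show $J(\q_{t+1},x)\le\norm{\q_{t+1}}_1\le 2^{t+1}$. Since $D^{-1}A$ is row-stochastic, $\norm{\p D^{-1}A}_1\le\norm{\p}_1$, hence $\norm{\p M}_1=\norm{\p-\p D^{-1}A}_1\le 2\norm{\p}_1$; together with $\norm{\chi_v}_1=1$ this gives $\norm{\q_t}_1\le 2^t$ by a trivial induction. The concavity patching extends the bound to all $x\in[0,2m]$ cleanly: on any interval between two consecutive breakpoints of $J(\q_{t+1},\cdot)$, the left-hand side is linear while the claimed right-hand side is concave in $x$, so an inequality verified at each endpoint (by the $x\le K$ analysis, the $x>K$ analysis, or both) propagates through the interval. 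The main technical obstacle is the bookkeeping of this concavity patching at the boundary where an interval straddles $x=K$; the numerical estimate $\sqrt{1+\Theta}+\sqrt{1-\Theta}\le 2-\Theta^2/4$ is the arithmetic heart of the argument but is routine once spotted.
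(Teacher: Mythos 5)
Your proposal is correct and follows essentially the same route as the paper's proof: induction on $t$, reduction to breakpoints $x=\vol(S_i(\q_t))$ via piecewise linearity of the left side and concavity of the right side, the $L^1$-growth bound $\norm{\p M}_1\leq 2\norm{\p}_1$ for $x>K$, and the Convergence Lemma plus the estimate $\sqrt{1+\Theta}+\sqrt{1-\Theta}\leq 2-\Theta^2/4$ for $x\leq K$. The only difference is cosmetic: you verify the numerical inequality explicitly (the paper just asserts it) and you invoke inequality~(\ref{eqn:potential}) directly on the sweep sets of $\q_{t+1}$, which is exactly how the paper applies its Convergence Lemma.
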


\begin{proof}
The proof is by induction and is similar to the Lemma 4.2 in~\cite{OT12:clustering}.

If $t=0$, then the LHS is $x/d(v)$ for $x\leq d(v)$ and is $1$ for $x>d(v)$, and the RHS is at least $\sqrt{x/d(v)}$ for any $x\in [0,2m]$. Thus, the lemma holds in this case.

Assume the lemma holds for $t-1$. Since $J(q_t,x)$ is piecewise linear in $x$, and the RHS is concave, we only need to show the lemma holds for $x=\vol(S_i(q_t))$ for any $i\leq n$.
\begin{itemize}
\item For $x> K$, the RHS is at least $2^t$. On the other hand, for any vector $\p$, we have
    \begin{eqnarray*}
    J(\p M,2m)=
    %\sum_{v\in V}|pM(v)|=\sum_{v\in V}|\sum_{u\in V}\p(u)M_{uv}|&\leq& \sum_{u\in V}|\p(u)|\sum_{v\in V}|M_{uv}|\\
    \norm{\p M}_1=\sum_{u}\abs{\sum_{v}\p(v)M_{vu}}&\leq& \sum_{v}|\p(v)|\sum_{u}\abs{M_{vu}}\\
    &\leq& 2\norm{\p}_1=2J(\p,2m),
    \end{eqnarray*}
    %where we used the fact that the fact $$.
    %for each $u$, $\sum_{v\in V}|M_{uv}|\leq 2$ by the definition of $M$.
    Therefore,
    $$J(\q_t,x)\leq J(\q_t,2m)\leq 2J(\q_{t-1},2m)\leq\cdots\leq 2^tJ(\q_0,2m)=2^t.$$
    So the lemma holds for $x$ in this case.
\item For $x\leq K$, recall that $x=\vol(S_i(q_t))$, by Lemma~\ref{lemma:converge} and the induction hypothesis, we have
    \begin{eqnarray*}
    J(\q_t,x)&\leq& J(\q_{t-1},x+x\Theta)+J(\q_{t-1},x-x\Theta)\\
    &\leq& \frac{2*2^{t-1}x}{K}+\sqrt{\frac{x}{d(v)}}
    \Big(2-\frac{\Theta^2}{4}\Big)^{t-1}(\sqrt{1+\Theta}+\sqrt{1-\Theta})\\
    &\leq& \frac{2^tx}{K}+\sqrt{\frac{x}{d(v)}}\Big(2-\frac{\Theta^2}{4}\Big)^t,
    \end{eqnarray*}
    where the last inequality follows from that $$\sqrt{1+\Theta}+\sqrt{1-\Theta}\leq 2-\frac{\Theta^2}{4}.$$
\end{itemize}
This completes the proof.
\end{proof}

\subsection{A Lower Bound for the Potential Function}
We show that if the graph contains a pair subgraph with small B-ratio, then we can have a good lower bound on $J(\chi_vM^t)$ for some vertex $v$. The following lemma is similar to the upper bounds on the escaping probability of random walks given by Oveis Gharan and Trevisan~\cite{OT12:clustering}. Our proof uses a new spectral analysis by using the \textit{$H$-norm} of a vector and may be modified to give a different proof for the corresponding results in~\cite{OT12:clustering}.

\begin{lemma}~\label{lemma:lower}
If $U=(L,R)$ has B-ratio $\beta(L,R)\leq \theta$, then for any integer $t>0$,
\begin{enumerate}
\item there exists a vertex $v\in U$ such that  $\abs{\q_t}(U)\geq (2-2\theta)^t$, where $\q_t=\chi_vM^t$;
\item there exists a subset $U^t\subseteq U$ with $\vol(U^t)\geq \vol(U)/2$ satisfying that for any $v\in U^t$ and $\q_t=\chi_vM^t$,
$J(\q_t,\vol(U))\geq |\q_t|(U)\geq \frac{1}{400}(2-6\theta)^t$,
where we have assumed that $\theta<1/3$.
\end{enumerate}
\end{lemma}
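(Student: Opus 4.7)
My plan is to analyze both parts through the signed indicator vector $\sigma : V \to \{-1, 0, +1\}$ with $\sigma|_L = +1$, $\sigma|_R = -1$, $\sigma|_{\bar U} = 0$, viewed as a column vector. Since $DM = D - A = M^T D$, the matrix $M$ is self-adjoint in the $D$-inner product $\langle x, y\rangle_D := x^T D y$, and its eigenvalues $\lambda_0 \le \cdots \le \lambda_{n-1}$ coincide with those of $\LL$ and lie in $[0, 2]$. A direct edge count yields $\|\sigma\|_D^2 = \vol(U)$, $\sigma^T(D - A)\sigma = 4e(L,R) + e(U, \bar U)$, and $\sigma^T(D + A)\sigma = 4e(L) + 4e(R) + e(U, \bar U) \le 2\beta(L,R)\vol(U) \le 2\theta\vol(U)$ (the first inequality by trivially doubling $e(U, \bar U)$ in the numerator of $\beta$); consequently the Rayleigh quotient of $M$ at $\sigma$ equals $2 - \sigma^T(D+A)\sigma/\vol(U) \ge 2 - 2\theta$.

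For Part 1, I would expand $\sigma = \sum_i a_i x_i$ in a $D$-orthonormal eigenbasis of $M$, obtaining $\sum_i a_i^2 \lambda_i \ge (2-2\theta)\sum_i a_i^2$. Since $\lambda \mapsto \lambda^t$ is convex on $[0,\infty)$, Jensen's inequality for the probability measure $a_i^2/\sum_j a_j^2$ propagates this to $\langle M^t\sigma, \sigma\rangle_D = \sum_i a_i^2 \lambda_i^t \ge (2-2\theta)^t \vol(U)$. Expanding the left-hand side as $\sum_{v, u \in U} d(v)\sigma(v)\sigma(u)(M^t)_{vu}$ and bounding each summand by $d(v)|(M^t)_{vu}|$ then gives $\sum_{v \in U} d(v)\,|\chi_v M^t|(U) \ge (2-2\theta)^t \vol(U)$, and averaging with weights $d(v)/\vol(U)$ exhibits the desired vertex.

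For Part 2, the plan is to first extract the spectral mass of $\sigma$ on the high-eigenvalue part. From $\sum_i a_i^2(2 - \lambda_i) \le 2\theta\vol(U)$ and the observation that $2 - \lambda_i \ge 6\theta$ whenever $\lambda_i \le 2 - 6\theta$, a Markov-type estimate forces $\sum_{i : \lambda_i > 2 - 6\theta} a_i^2 \ge \tfrac{2}{3}\vol(U)$, hence $\langle M^t \sigma, \sigma\rangle_D \ge \tfrac{2}{3}(2 - 6\theta)^t\vol(U)$. A Cauchy--Schwarz restricted to $U$, namely $\bigl(\sum_{v \in U} d(v)\sigma(v)(M^t\sigma)(v)\bigr)^2 \le \vol(U) \sum_{v \in U} d(v)(M^t\sigma)(v)^2$, combined with the pointwise bound $|(M^t\sigma)(v)| \le |\chi_v M^t|(U)$, then yields the second-moment estimate
\[
\sum_{v \in U} d(v)\,|\chi_v M^t|(U)^2 \;\ge\; \tfrac{4}{9}(2-6\theta)^{2t}\vol(U).
\]
The concluding step combines this with the first-moment bound from Part 1 and the pointwise cap $|\chi_v M^t|(U) \le \|\chi_v M^t\|_1 \le 2^t$ (which follows inductively from $\|\p M\|_1 \le 2\|\p\|_1$) via a trimming/Paley--Zygmund-type argument at threshold $(2-6\theta)^t/400$ to pin down $U^t$.

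The main obstacle will be this last distributional step: showing that the set $\{v \in U : |\chi_v M^t|(U) < (2-6\theta)^t/400\}$ cannot have volume exceeding $\vol(U)/2$. The weakened rate $(2-6\theta)$ (versus Part 1's $(2-2\theta)$) and the explicit factor $1/400$ together should provide precisely the slack required for a trimming argument that simultaneously leverages the first- and second-moment estimates above and the $\ell^\infty$ cap $2^t$ on the good side; I expect the constants to be calibrated so that assuming the bad set has volume $> \vol(U)/2$ forces the good set to contribute both too little to $\sum d(v)|\chi_v M^t|(U)$ and too little to $\sum d(v)|\chi_v M^t|(U)^2$ to be compatible with the derived lower bounds.
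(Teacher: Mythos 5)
Your Part 1 is correct and is essentially the paper's own argument in unnormalized form (the paper works with the normalized vectors $\rho_U,\psi_U$ and the power-mean/Chebyshev step instead of Jensen, then averages over $v\in U$ exactly as you do), and your Markov-type estimate placing spectral mass at least $\tfrac{2}{3}\vol(U)$ on eigenvalues $\geq 2-6\theta$ is also the paper's first step for Part 2. The genuine gap is the concluding distributional step of Part 2, which you flag yourself. Writing $f(v):=|\chi_vM^t|(U)$, the three facts you propose to combine --- $\sum_{v\in U}d(v)f(v)\geq(2-2\theta)^t\vol(U)$, $\sum_{v\in U}d(v)f(v)^2\geq\tfrac{4}{9}(2-6\theta)^{2t}\vol(U)$, and the cap $f(v)\leq 2^t$ --- simply do not imply that $\{v\in U: f(v)\geq\tfrac{1}{400}(2-6\theta)^t\}$ has volume $\geq\vol(U)/2$. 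The cap exceeds the moment scales by factors exponential in $t$: take $f(v)=2^t$ on a subset of volume $(1-\theta)^t\vol(U)$ and $f(v)=0$ elsewhere; one checks that for $\theta<1/3$ this satisfies both moment lower bounds and the cap, yet the superlevel set has volume $(1-\theta)^t\vol(U)$, which is far below $\vol(U)/2$ for the values $t=\Theta(\epsilon\theta^{-1}\ln k)$ actually used downstream. The constant slack from $1/400$ and the weakened rate $2-6\theta$ cannot absorb an exponential-in-$t$ discrepancy, so no calibration of a purely moment-plus-cap (Paley--Zygmund/trimming) argument can close this step.

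The idea you are missing is the paper's: rather than controlling the distribution of $f$ through global moments of $\sigma$, run the spectral lower bound directly on every candidate ``bad'' sub-pair. For any $Z=(L_Z,R_Z)$ with $L_Z\subseteq L$, $R_Z\subseteq R$ and $\vol(Z)\geq\vol(U)/2$, the \emph{renormalized} signed indicator $\psi_Z$ (with entries $\pm\sqrt{d(v)/\vol(Z)}$) satisfies $\|\psi_U-\psi_Z\|_2^2=2-2\sqrt{\vol(Z)/\vol(U)}\leq 2-\sqrt{2}$, so by the triangle inequality for the seminorm $\|\cdot\|_H$ with $H=\{i:\lambda_i\geq 2-6\theta\}$ one gets $\|\psi_Z\|_H\geq\sqrt{2/3}-\sqrt{2-\sqrt{2}}>1/20$, hence $\psi_Z\LL^t\psi_Z^T\geq\tfrac{1}{400}(2-6\theta)^t$. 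Averaging exactly as in your Part 1, but now over $v\in Z$, produces a vertex of $Z$ itself with $|\chi_vM^t|(U)\geq\tfrac{1}{400}(2-6\theta)^t$; applying this with $Z$ equal to the sublevel set (were it to have volume at least $\vol(U)/2$) gives the contradiction, so $U^t:=\{v: |\chi_vM^t|(U)\geq\tfrac{1}{400}(2-6\theta)^t\}$ must have volume at least $\vol(U)/2$. In short: replace your second-moment/trimming step by this perturbation argument from $\psi_U$ to $\psi_Z$; your $2/3$ spectral-mass bound is precisely the other half of that proof.
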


\begin{proof} To give the proof, we will use the following notation. For a set $U=(L,R)$, define $\rho_{U}$ and $\psi_{U}$ as
\begin{equation*}
\begin{aligned}[c]
\rho_{U}(v)=\left\{ \begin{array}{ll}
d(v)/\vol(U) & \textrm{if $v\in L$,}\\
-d(v)/\vol(U) & \textrm{if $v\in R$,}\\
0 & \textrm{otherwise.}
\end{array}
\right.
\end{aligned}
\qquad%\Longleftrightarrow\qquad
\begin{aligned}[c]
\psi_{U}(v)=\left\{ \begin{array}{ll}
\sqrt{d(v)/\vol(U)} & \textrm{if $v\in L$,}\\
-\sqrt{d(v)/\vol(U)} & \textrm{if $v\in R$,}\\
0 & \textrm{otherwise.}
\end{array}
\right.
\end{aligned}
\end{equation*}

\begin{enumerate}
\item For the first part, we will show that
\begin{eqnarray}
\rho_UM^t(L,-R)\geq (2-2\theta)^t.~\label{eqn:lower}
\end{eqnarray}
If the above inequality holds, then by the fact that $$\rho_UM^t(L,-R)=\sum_{v\in U}\frac{d(v)}{\vol(U)}\sgn(v,L)\chi_vM^t(L,-R),$$
where $\sgn(v,L)$ equals $1$ if $v\in L$ and $-1$ if $v\in R$, we know there exists a vertex $v\in U$ satisfying $\sgn(v,L)\chi_vM^t(L,-R)\geq (2-2\theta)^t$. Then the lemma follows from the fact that $|\p|(U)\geq \max\{\p(L,-R),\p(R,-L)\}$ for any $\p$.

To show inequality~(\ref{eqn:lower}), we note that for any $t\geq 0$,
$$\rho_UM^t(L,-R)=\rho_UD^{-1/2}\LL^t D^{1/2}(L,-R)=\psi_U\LL^t\psi_U^T.$$
On the other hand,
\begin{eqnarray*}
\psi_U(2I-\LL)\psi_U^T&=&\psi_UD^{-1/2}(D+A)D^{-1/2}\psi_U^T\\
&=&\sum_{u\sim v}(\psi_U(u)/\sqrt{d(u)}+\psi_U(v)/\sqrt{d(v)})^2\\
&=&\frac{4e(L)+4e(R)+e(U,\bar{U})}{\vol(U)}
\leq 2\theta,
\end{eqnarray*}
which implies that
\begin{eqnarray}
\psi_U\LL\psi_U^T\geq 2-2\theta. \label{eqn:base}
\end{eqnarray}

Now recall that $0=\lambda_0\leq\lambda_1\leq\cdots\leq \lambda_{n-1}\leq 2$ are the eigenvalues of the Laplacian $\LL$. Let $\vv'_0,\vv'_1,\cdots,\vv'_{n-1}$ be the corresponding orthonormal eigenvectors of $\LL$. If we write $\psi_U=\sum_{i}\alpha_i\vv'_i$, then by inequality~(\ref{eqn:base}), we have
$\sum_{i}\lambda_i\alpha_i^2\geq 2-2\theta$. Therefore,
$$\psi_U\LL^t\psi_U^T= \sum_{i}\lambda_{i}^t\alpha_i^2\geq (\sum_{i}\lambda_{i}\alpha_i^2)^t\geq (2-2\theta)^t,$$
where the second inequality follows from the fact that $\sum_i\alpha_i^2=\norm{\psi_U}_2^2=1$ and the Chebyshev's sum inequality.

\item For the second part, we show that for any set $Z=(L_Z,R_Z)$ such that $L_Z\subseteq L$, $R_Z\subseteq R$ and $\vol(Z)\geq\frac{\vol(U)}{2}$,
\begin{eqnarray}
\rho_ZM^t(L_Z,-R_Z)\geq \frac{1}{400}(2-6\theta)^t,~\label{eqn:half}
\end{eqnarray}
from which we know there exists at least one vertex $v$ in $Z$ such that $$|\chi_vM^t|(U)\geq |\chi_vM^t|(Z)\geq \sgn(v,L_Z)\chi_vM^t(L_Z,-R_Z)\geq \frac{1}{400}(2-6\theta)^t.$$
Then by the choice of $Z$, we know that the set $U^t:=\{v: |\chi_vM^t|(U)\geq \frac{1}{400}(2-6\theta)^t\}$ has volume at least $\vol(U)/2$ and the lemma's statement holds.

On the other hand, we have that $\rho_ZM^t(L_Z,-R_Z)=\psi_ZM^t\psi_Z^T$ for the same reason as in the first part of the proof, so we only need to show that
$$\psi_ZM^t\psi_Z^T\geq \frac{1}{400}(2-6\theta)^t.$$

Let $H=\{i:\lambda_i\geq 2-6\theta\}$. For an vector $\p$, define its $H$-norm as $\Vert \p \Vert_H:=\sqrt{\sum_{i\in H}\langle \p,\vv'_i \rangle ^2}$. It is straightforward to show that $\norm{\cdot}_H$ is a seminorm. Recall that $\psi_U=\sum_{i}\alpha_i\vv'_i$ and $\sum_i\lambda_i\alpha_i^2\geq 2-2\theta$. By the definition of $H$-norm and that $\norm{\psi_U}_2^2=1$, we have
$$\sum_i\lambda_i\alpha_i^2\leq 2\sum_{i\in H}\alpha_i^2+(2-6\theta)\sum_{i\notin H}\alpha_i^2=2\Vert\psi_U\Vert_H^2+(2-6\theta)(1-\Vert\psi_U\Vert_H^2),$$
which gives that
$$\Vert\psi_U\Vert_H^2\geq 2/3.$$

Now we write $\psi_Z=\sum_{i}\beta_i\vv'_i$. It is easy to show that
\begin{eqnarray*}
\norm{\psi_U-\psi_Z}_2^2&=&\sum_{v\in Z}\Big(\sqrt{\frac{d(v)}{\vol(Z)}}-\sqrt{\frac{d(v)}{\vol(U)}}\Big)^2+\sum_{v\in U\backslash Z}\frac{d(v)}{\vol(U)}\\
&=&\sum_{v\in Z}d(v)\Big(\frac{1}{\vol(Z)}-\frac{2}{\sqrt{\vol(Z)\vol(U)}}+\frac{1}{\vol(U)}\Big)+\frac{\vol(U\backslash Z)}{\vol(U)}\\
&=&2-2\sqrt{\frac{\vol(Z)}{\vol(U)}}\\
&\leq& 2-\sqrt{2},
\end{eqnarray*}
where the last inequality follows from our assumption that $\vol(Z)\geq \vol(U)/2$.

Hence, $$\norm{\psi_U-\psi_Z}_H\leq\norm{\psi_U-\psi_Z}_2\leq \sqrt{2-\sqrt{2}}.$$

Then by the triangle inequality, we have
$$\norm{\psi_Z}_H\geq \norm{\psi_U}_H-\norm{\psi_U-\psi_Z}_H\geq \sqrt{\frac{2}{3}}-\sqrt{2-\sqrt{2}}> \frac{1}{20}.$$

Finally, we have
$$\psi_Z\LL^t\psi_Z=\sum_{i}\lambda_i^t\beta_i^2\geq(2-6\theta)^t\norm{\psi_Z}_H^2> \frac{1}{400}(2-6\theta)^2.$$
\end{enumerate}
\end{proof}

Now we are ready to prove Theorem~\ref{thm:smallBR}.
\begin{proof}[Proof of Theorem~\ref{thm:smallBR}]
Clearly the algorithm \verb|SwpDB| runs in polynomial time. Now we show the correctness of the algorithm. Let $\Theta=4\sqrt{\theta/\epsilon}$. Assume on the contrary that the algorithm \verb|SwpDB|$(G,k,\theta,\epsilon)$ does not find a desired subgraph, and thus for any $v\in V$, and $t\leq T=\frac{\epsilon\ln 2k}{2\theta}$, the sweep sets $S_i(\chi_vM^t)$ of volume at most $K=2k^{1+\epsilon}$ have B-ratio at least $4\sqrt{\theta/\epsilon}$.
Then by Lemma~\ref{lemma:upper}, for any $v\in V$,
\begin{eqnarray*}
J(\chi_vM^T,k)\leq 2^T\frac{k}{2k^{1+\epsilon}}+\sqrt{k}\Big(2-\frac{\Theta^2}{4}\Big)^T&\leq& 2^T \Big(\frac{1}{2k^{\epsilon}}+\sqrt{k}\Big(1-\frac{2\theta}{\epsilon}\Big)^{\frac{\epsilon\ln 2k}{2\theta}}\Big)\\
&\leq& 2^T\Big(\frac{1}{2k^{\epsilon}}+\frac{1}{2k^{1/2}}\Big)\\
&\leq& 2^Tk^{-\epsilon},
\end{eqnarray*}
where in the last inequality we used the assumption that $0<\epsilon<1/2$.

On the other hand, since $U=(L,R)$ is subgraph such that $\beta(L,R)\leq \theta$ and $\vol(U)\leq k$, then by Lemma~\ref{lemma:lower}, we know that there exists a vertex $u\in U$ such that,
\begin{eqnarray*}J(\chi_uM^T,k)\geq (2-2\theta)^T\geq 2^T(1-\theta)^{\frac{\epsilon\ln 2k}{2\theta}}\geq 2^T e^{-\frac{\epsilon \ln 2k}{2(1-\theta)}}&=&2^T(2k)^{-\frac{\epsilon}{2(1-\theta)}}\\
&>&2^Tk^{-\epsilon},
\end{eqnarray*}
where in the last inequality we used the assumption that $\theta<1/4$ and that $k>4$. Hence we have derived a contradiction, which completes the proof.
\end{proof}
By using a simple trace bound, we can obtain the following corollary that gives a spectral characterization of the small dense bipartite-like subgraphs and thus establish inequality~(\ref{eqn:dbspectra}).

\begin{corollary}~\label{thm:spectra}
If $\lambda_{n-k}\geq 2-2\eta$, then there is a polynomial time algorithm such that for any $0<\epsilon<1$, it finds a subset $(X,Y)$ of volume at most $O(\vol(G)/k^{1-\epsilon})$ and B-ratio $O(\sqrt{16(\eta/\epsilon)\log_kn})$.
\end{corollary}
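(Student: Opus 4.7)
The plan is to run \verb|SwpDB| with parameters designed so that Lemma~\ref{lemma:upper} collides with a trace-based lower bound, mirroring the structure of the proof of Theorem~\ref{thm:smallBR} but substituting the spectral assumption $\lambda_{n-k}\ge 2-2\eta$ for the existence of a combinatorial witness $U$. I would fix the target volume $K=2\vol(G)/k^{1-\epsilon}$ and take the walk length $T$ of order $\epsilon^{-1}\eta^{-1}\log_kn$; the target B-ratio $\Theta=O(\sqrt{(\eta/\epsilon)\log_kn})$ is the quantity to be extracted from the contradiction. Assuming for contradiction that for every $v\in V$ and every $t\le T$ the sweep of $\chi_vM^t$ produces no sweep set of volume at most $K$ with B-ratio at most $\Theta$, Lemma~\ref{lemma:upper} applies verbatim and yields
\[
J(\chi_vM^T,x)\;\le\;\frac{2^Tx}{K}+\sqrt{\frac{x}{d(v)}}\Big(2-\tfrac{\Theta^2}{4}\Big)^T
\]
uniformly in $v$ and $x$.

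For the lower bound I would exploit the coincidence of spectra $\Tr(M^{2T})=\Tr(\LL^{2T})=\sum_i\lambda_i^{2T}$, which under the hypothesis is at least $k(2-2\eta)^{2T}$. A direct computation from $M=D^{-1/2}\LL D^{1/2}$ gives the identity $(M^{2T})_{vv}=d(v)\sum_u(\chi_vM^T)(u)^2/d(u)$, whose right-hand side is precisely the squared $\ell^2$ norm of $\chi_vM^T$ in the directed-edge view of the potential function used inside the proof of Lemma~\ref{lemma:converge}. Since for any real vector the $\ell^2$ norm is dominated by the $\ell^1$ norm, this $\ell^2$ quantity is at most $J(\chi_vM^T,\vol(G))^2=\|\chi_vM^T\|_1^2$. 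Averaging the trace identity over the vertex set and picking the best vertex $v^\ast$ yields $\|\chi_{v^\ast}M^T\|_1\ge(2-2\eta)^T\sqrt{k/(n\,d(v^\ast))}$, and concavity of $J$ in $x$ then promotes this to
\[
J(\chi_{v^\ast}M^T,K)\;\ge\;\frac{K}{\vol(G)}\cdot(2-2\eta)^T\sqrt{\frac{k}{n\,d(v^\ast)}}.
\]

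Setting this lower bound against the upper bound from Lemma~\ref{lemma:upper} at $x=K$ and dividing through by $2^T$ gives an inequality in $\Theta$; tuning $T$ to be of order $\epsilon^{-1}\eta^{-1}\log_kn$ makes $(1-\eta)^{2T}$ absorb the $\sqrt{k/n}$ loss from the trace average and forces $(1-\Theta^2/8)^T$ to be so small that $\Theta^2\ge\Omega((\eta/\epsilon)\log_kn)$, yielding the claimed B-ratio bound. The volume bound $O(\vol(G)/k^{1-\epsilon})$ is enforced directly by the choice of $K$ in the algorithm, so the sweep set output by \verb|SwpDB| has the promised properties.

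The hard part will be the bookkeeping in the final step: the trace bound is only an \emph{average} over vertices, and the $\sqrt{k/n}$ loss incurred when extracting a per-vertex lower bound on $J$ has to be balanced against the choice of $T$ so that it emerges as precisely the extra $\sqrt{\log_kn}$ factor distinguishing the B-ratio bound of Corollary~\ref{thm:spectra} from that of Theorem~\ref{thm:smallBR}. Once this balance is struck, the rest of the argument is the same contradiction pattern as in the proof of Theorem~\ref{thm:smallBR}.
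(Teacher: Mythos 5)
Your high-level frame --- run steps 2 and 3 of \texttt{SwpDB} with $K=\Theta(\vol(G)/k^{1-\epsilon})$, assume for contradiction that no small sweep set has B-ratio below $\Theta$, apply Lemma~\ref{lemma:upper}, and contradict a trace bound coming from $\lambda_{n-k}\geq 2-2\eta$ --- is indeed the paper's. But the lower-bound extraction you propose cannot produce the contradiction. Your identity $(M^{2T})_{vv}=d(v)\sum_u(\chi_vM^T)(u)^2/d(u)$ is correct, and so is the domination of $\ell^2$ by $\ell^1$, but passing to $\norm{\chi_{v^\ast}M^T}_1$ and then back down to $J(\chi_{v^\ast}M^T,K)$ via concavity loses factors of order $\sqrt{n}$ and $k^{1-\epsilon}$, and the comparison point $x=K$ is fatal: there the upper bound of Lemma~\ref{lemma:upper} has first term $2^Tx/K=2^T$, while your lower bound is at most $\frac{K}{\vol(G)}(2-2\eta)^T\sqrt{k/(n\,d(v^\ast))}=2^{T+1}(1-\eta)^Tk^{\epsilon-1}\sqrt{k/(n\,d(v^\ast))}\ll 2^T$, since $k^{\epsilon-1}<1$, $(1-\eta)^T\leq 1$ and $k\leq n\,d(v^\ast)$. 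Dividing by $2^T$, the inequality you would need is $2k^{\epsilon-1}(1-\eta)^T\sqrt{k/(n\,d(v^\ast))}>1+(\text{positive term})$, whose left-hand side is always below $1$; no tuning of $T$ or $\Theta$ can repair this --- in particular, increasing $T$ only shrinks $(1-\eta)^T$, so it cannot ``absorb'' the $\sqrt{k/n}$ loss as you hope, and the same obstruction persists if you compare at smaller $x$.

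The paper avoids any detour through $\ell^1$ norms or concavity: it bounds each diagonal entry by the potential at the \emph{tiny} budget $x=d(v)$, namely $\chi_vM^T\chi_v^T\leq J(\chi_vM^T,d(v))\leq 2^Td(v)/K+(2-\Theta^2/4)^T$, sums this over all $v$, and compares with $\Tr(M^T)=\sum_i\lambda_i^T\geq k(2-2\eta)^T$ (no need for $M^{2T}$, since all eigenvalues of $\LL$ lie in $[0,2]$). With $T=\frac{\epsilon\ln k}{2\eta}$ and $K=2\vol(G)/k^{1-\epsilon}$, the background terms sum to $2^T\vol(G)/K=0.5\cdot 2^Tk^{1-\epsilon}$, and the second terms sum to $n(1-\Theta^2/8)^T\leq 1$ exactly when $\Theta^2=16(\eta/\epsilon)\log_kn$; this is where the extra $\sqrt{\log_kn}$ factor really comes from --- it is needed to kill the factor $n$ arising from summing over all $n$ starting vertices, not to balance a trace-averaging loss against $T$ (also note the paper's $T$ is $\epsilon\ln k/(2\eta)$, not of order $\epsilon^{-1}\eta^{-1}\log_kn$). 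Since $k(2-2\eta)^T\geq 2^Tk^{1-\epsilon}>2^T(0.5k^{1-\epsilon}+1)$, the contradiction follows. If you keep your quadratic identity, the salvageable version of your idea is simply to bound $(M^{2T})_{vv}\leq J(\chi_vM^{2T},d(v))$ and rerun the paper's summation argument with $2T$ in place of $T$, which is no longer a different proof.
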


\begin{proof}
Given $k,\eta,\epsilon$, we set $T=\frac{\epsilon\ln k}{2\eta}$, $K=\frac{\vol(G)}{0.5k^{1-\epsilon}}$, and run the step 2 and 3 of the algorithm \verb|SwpDB| to find a subgraph, which clearly runs in polynomial time. Assume that during this process, all the sweep sets $S_i(\chi_vM^t)$ of volume at most $K$ have B-ratio $\Theta=\sqrt{16(\eta/\epsilon)\log_kn}$, for any $v\in V$ and $t\leq T$. Then, by Lemma~\ref{lemma:upper}, we have that for any $v\in V$,
$$\chi_vM^T\chi_v^T\leq J(\chi_vM_T,d(v))\leq 2^T\frac{d(v)}{K}+\Big(2-\frac{\Theta^2}{4}\Big)^T.$$
Therefore,
\begin{eqnarray*}
\sum_{v\in V}\chi_vM^T\chi_v^T&\leq & 2^T\Big(\frac{\vol(G)}{K}+n\Big(1-\frac{\Theta^2}{8}\Big)^T\Big)\\
&=& 2^T\Big(0.5k^{1-\epsilon}+n\Big(1-\frac{2\eta\log_kn}{\epsilon}\Big)^{\frac{\epsilon\ln k}{2\eta}}\Big)\\
&\leq &2^T(0.5k^{1-\epsilon}+1)\\
&<& 2^Tk^{1-\epsilon}.
\end{eqnarray*}
On the other hand, by the trace formula,
$$\sum_{v\in V}\chi_vM^T\chi_v^T=\Tr(M^T)=\sum_{i=1}^{n}\lambda_i^t\geq k(2-2\eta)^T=2^Tk(1-\eta)^{\frac{\epsilon\ln k}{2\eta}}\geq 2^Tk^{1-\epsilon},$$
which is a contradiction.
\end{proof}

\section{A Local Algorithm for Dense Bipartite-like Subgraphs}~\label{sec:local}
We will use the truncated operation to give our local algorithm \verb|LocDB| (short for ``local algorithm for dense bipartite subgraph'').
\begin{center}
\begin{tabular}{|p{\textwidth}|}
\hline
\verb|LocDB|$(G,v,k,\theta,\epsilon)$\\
\hline
Input: A graph $G$, a vertex $v$, a target volume $k>2560000$, a target B-ratio $\theta<1/3$ and an error parameter $0<\epsilon<1/2$.

Output: A subgraph $(X,Y)$.
\begin{enumerate}
\item Let $T=\frac{\epsilon\ln 1600k}{6\theta}$. Let $\xi_0 =\frac{k^{-1-\epsilon}}{800T}$, $\xi_t=\xi_0 2^t$. Let $\qq_0:=\chi_v$, $\rr_0:=[\qq_0]_{\xi_0}$. Let $\mathcal{F}=\emptyset$.
\item For each time $1\leq t\leq T$:
\begin{enumerate}
\item Compute $\qq_{t}:=\rr_{t-1}M$, $\rr_{t}:=[\qq_{t}]_{\xi_t}$;
\item Sweep over the support of $\qq_t$ and add to $\mathcal{F}$ all the sweep sets.
\end{enumerate}
\item Output the subgraph $(X,Y)$ with the smallest B-ratio ratio among all sets in $\mathcal{F}$.
\end{enumerate}\\
\hline
\end{tabular}
\end{center}

Note that in the algorithm we just sweep the \textit{support} of a given vector, which is important for the computation to be local.

Inspired by the proof of the correctness of \texttt{SwpDB}, we will use the upper bound and lower bound of the potential function $J(\qq_t,x)$ to show the correctness of the local algorithm. Such bounds can be obtained by combining the following properties of the truncation operations in the algorithm.

\begin{proposition}~\label{prop:trunc}
For any vertex $v$, if $\q_t=\chi_vM^t$ and $\qq_t,\rr_t$ are as defined in the algorithm \verb|LocDB|, then for any $t\geq 0$,
\begin{enumerate}
\item $\norm{\qq_t}_1\leq 2^t$;
\item $\abs{\rr_t-\q_t}\leq \xi_0 t2^t\dd$, where $\dd$ is the degree vector.
\end{enumerate}
\end{proposition}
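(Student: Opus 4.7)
The plan is to prove both parts by induction on $t$, using Fact~\ref{fact:truncation} to bound the cost of each truncation step and a short direct computation to control how errors propagate through multiplication by $M$.

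For Part~1, the base case is immediate since $\norm{\qq_0}_1 = \norm{\chi_v}_1 = 1$. For the inductive step, Fact~\ref{fact:truncation}(1) gives $\norm{\rr_{t-1}}_1 \leq \norm{\qq_{t-1}}_1$, because $\abs{[\p]_\xi} \leq \abs{\p}$ entrywise. Then I would invoke the same $\ell^1$ contraction of $M$ that appears inside the proof of Lemma~\ref{lemma:upper}: $\norm{\p M}_1 \leq \sum_v \abs{\p(v)}\sum_u \abs{M_{vu}} \leq 2\norm{\p}_1$, since every row of $M = I - D^{-1}A$ has absolute row sum exactly $2$. Combining gives $\norm{\qq_t}_1 = \norm{\rr_{t-1} M}_1 \leq 2\norm{\rr_{t-1}}_1 \leq 2\cdot 2^{t-1}$, as desired.

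For Part~2, set $\mathbf{e}_t := \rr_t - \q_t$ and induct on $t$. The base case $t=0$ reduces to the fact $\abs{[\chi_v]_{\xi_0} - \chi_v} \leq \xi_0 \dd$ (Fact~\ref{fact:truncation}(1)), which is the right thing to feed into the recursion. For the inductive step I would write
\begin{eqnarray*}
\rr_t - \q_t &=& [\rr_{t-1} M]_{\xi_t} - \q_{t-1} M \\
&=& \bigl([\rr_{t-1}M]_{\xi_t} - \rr_{t-1}M\bigr) + (\rr_{t-1} - \q_{t-1})M,
\end{eqnarray*}
and bound the two summands separately. The first is at most $\xi_t \dd = \xi_0 2^t \dd$ entrywise, again by Fact~\ref{fact:truncation}(1). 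For the second, the key observation is the following pointwise bound: if $\abs{\mathbf{e}} \leq c\,\dd$ entrywise, then for every vertex $u$,
$$\abs{\mathbf{e} M}(u) = \Bigl|\mathbf{e}(u) - \sum_{v\sim u}\frac{\mathbf{e}(v)}{d(v)}\Bigr| \leq c\,d(u) + \sum_{v\sim u} c = 2c\,d(u),$$
so $\abs{\mathbf{e} M} \leq 2c\,\dd$. Applying this with $c = \xi_0(t-1)2^{t-1}$ from the induction hypothesis yields $\abs{(\rr_{t-1}-\q_{t-1})M} \leq \xi_0(t-1)2^t\,\dd$, and adding the two contributions gives exactly $\xi_0 t\,2^t\,\dd$.

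The only nontrivial point is the propagation bound $\abs{\mathbf{e} M} \leq 2c\,\dd$ when $\abs{\mathbf{e}}\leq c\,\dd$: it is what forces the error to grow only linearly in $t$ (with an extra factor $2^t$), rather than accumulating uncontrollably. Everything else is either immediate from Fact~\ref{fact:truncation} or is the same row-sum calculation used for Part~1, so no further obstacle arises.
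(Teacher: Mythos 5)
Your overall strategy is exactly the paper's: both parts by induction, the row-sum bound $\norm{\p M}_1\leq 2\norm{\p}_1$ for Part~1, and for Part~2 the decomposition $\rr_t-\q_t=\bigl([\rr_{t-1}M]_{\xi_t}-\rr_{t-1}M\bigr)+(\rr_{t-1}-\q_{t-1})M$ together with the propagation bound $\abs{\mathbf{e}M}\leq 2c\,\dd$ when $\abs{\mathbf{e}}\leq c\,\dd$, which is precisely the paper's fact that $\abs{\p(D-A)(v)}\leq 2cd(v)$ whenever $\abs{\p}\leq c\1$ (your version is the same statement after rescaling by $D^{-1}$). Part~1 is correct as written.

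There is, however, a genuine (if small) flaw in the base case of Part~2. The statement at $t=0$ asserts $\abs{\rr_0-\q_0}\leq \xi_0\cdot 0\cdot 2^0\,\dd=0$, i.e.\ $\rr_0=\q_0$ exactly; your base case only establishes the weaker bound $\abs{[\chi_v]_{\xi_0}-\chi_v}\leq\xi_0\dd$, which is \emph{not} the $t=0$ instance of the induction hypothesis $c=\xi_0(t-1)2^{t-1}$ you later invoke. If you actually feed $\abs{\mathbf{e}_0}\leq\xi_0\dd$ into your recursion $\abs{\mathbf{e}_t}\leq\xi_0 2^t\dd+2\abs{\mathbf{e}_{t-1}}$, you get $\abs{\mathbf{e}_t}\leq\xi_0(t+1)2^t\dd$, which is strictly weaker than the claimed $\xi_0 t2^t\dd$ (already at $t=1$ it gives $4\xi_0\dd$ instead of $2\xi_0\dd$). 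The fix is the paper's one-line observation: for the parameters of \texttt{LocDB} one has $\xi_0 d(v)\leq 1$, so $\rr_0=[\chi_v]_{\xi_0}=\chi_v=\q_0$ and the base-case error is exactly zero; with $c=0$ at $t=0$ your inductive step then closes to the stated bound. Everything else in your argument matches the paper's proof.
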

\begin{proof}
We prove both the inequalities by induction.
\begin{enumerate}
\item If $t=0$, the inequality trivially holds since $\qq_0=\chi_v$. Now assume that the inequality holds for $t-1$. Then
    $$\norm{\qq_t}_1=\norm{r_{t-1}M}_1=\norm{[\qq_{t-1}]_{\xi_{t-1}}M}_1\leq \norm{[\qq_{t-1}]_{\xi_{t-1}}}_1*2\leq 2\norm{\qq_{t-1}}_1\leq 2^t,$$
    where the third inequality follows by the fact that $\norm{\p M}_1\leq 2\norm{\p}_1$ for all $\p$; the fourth inequality follows by the definition of truncation; and the last inequality follows by the induction.
\item If $t=0$, the inequality holds since $\q_0=\rr_0=[\q_0]_{\xi_0}=\chi_v$. If $t=1$, then $\rr_1=[\qq_1]_{\xi_1}=[\rr_0M]_{\xi_1}=[\q_0M]_{\xi_1}=[\q_1]_{\xi_1}$, and thus $\abs{\rr_1-\q_1}\leq \xi_1\dd=2\xi_0\dd$ by the Fact~\ref{fact:truncation}. Now assume that the inequality holds for $t-1$, that is,
    $\abs{\rr_{t-1}-\q_{t-1}}\leq \xi_0 (t-1)2^{t-1}\dd$, which is equivalent to $\abs{(\rr_{t-1}-\q_{t-1})D^{-1}}\leq \xi_0 (t-1)2^{t-1}\1$, where $\1$ is the all $1$ vector. On the other hand,
\begin{eqnarray*}
\abs{\rr_t-\q_t}=\abs{[\rr_{t-1}M]_{\xi_t}-\q_t}&\leq& \abs{\rr_{t-1}M-\q_t}+\xi_t\dd\\
&=&\abs{(\rr_{t-1}-\q_{t-1})D^{-1}(D-A)}+\xi_t\dd\\
&\leq&2*\xi_0 (t-1)2^{t-1}\dd+\xi_0 2^t\dd\\
&=&\xi_0 t2^t\dd,
\end{eqnarray*}
where the second to last inequality follows from the induction hypothesis and the fact that for any vector $\p$, if $\abs{\p}\leq c\1$ for some constant $c$, then for any vertex $v$, $$\abs{\p(D-A)(v)}=\abs{\sum_{u}\p(u)(D_{vu}-A_{vu})}\leq \sum_{u}\abs{\p(u)}(D_{vu}+A_{vu})\leq 2cd(v).$$
\end{enumerate}
\end{proof}

Note that the second part of Proposition~\ref{prop:trunc} directly implies a lower bound on $J(\qq_t,x)$. More specifically, we have the following corollary.

\begin{corollary}~\label{lemma:locallower}
For any set $U$, $\abs{\qq_t}(U)\geq \abs{\rr_t}(U)\geq \abs{\q_t}(U)-\xi_0 t2^t\vol(U)$.
\end{corollary}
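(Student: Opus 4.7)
The plan is to derive each inequality directly from results already in hand, with essentially no additional work.

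For the left inequality $\abs{\qq_t}(U) \geq \abs{\rr_t}(U)$, I would invoke Fact~\ref{fact:truncation} part 1 applied to $\qq_t$ at threshold $\xi_t$: since $\rr_t = [\qq_t]_{\xi_t}$, the fact gives the pointwise bound $\abs{\rr_t}(v) \leq \abs{\qq_t}(v)$ for every vertex $v$. Summing this over $v \in U$ yields the claim. This step uses nothing beyond the definition of truncation.

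For the right inequality $\abs{\rr_t}(U) \geq \abs{\q_t}(U) - \xi_0 t 2^t \vol(U)$, I would apply Proposition~\ref{prop:trunc} part 2, which gives the pointwise bound $\abs{\rr_t - \q_t}(v) \leq \xi_0 t 2^t \, d(v)$. Then by the pointwise (reverse) triangle inequality $\abs{\rr_t}(v) \geq \abs{\q_t}(v) - \abs{\rr_t - \q_t}(v)$, and summing over $v \in U$ produces
\[
\abs{\rr_t}(U) \;\geq\; \abs{\q_t}(U) - \xi_0 t 2^t \sum_{v \in U} d(v) \;=\; \abs{\q_t}(U) - \xi_0 t 2^t \vol(U),
\]
using the definition $\vol(U) = \sum_{v \in U} d(v)$.

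There is no real obstacle here: both inequalities are immediate coordinatewise consequences of, respectively, the shrinking property of the truncation operator and the cumulative truncation-error bound established in Proposition~\ref{prop:trunc}. The only thing worth flagging is that both steps go through pointwise bounds on $\abs{\cdot}$ (rather than on the signed vectors) before summing, so the reverse triangle inequality is used in its absolute-value form on each coordinate.
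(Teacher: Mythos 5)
Your proposal is correct and is exactly the derivation the paper intends: the left inequality is the pointwise shrinking of truncation (Fact~\ref{fact:truncation}, part 1, applied to $\rr_t=[\qq_t]_{\xi_t}$), and the right inequality is the pointwise bound $\abs{\rr_t-\q_t}\leq \xi_0 t2^t\dd$ from Proposition~\ref{prop:trunc} combined with the coordinatewise reverse triangle inequality and summation over $U$. The paper gives no separate proof, treating the corollary as an immediate consequence of Proposition~\ref{prop:trunc}, which is precisely what you have spelled out.
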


We can also give an upper bound on $J(\qq_t,x)$.
\begin{lemma}~\label{lemma:localupper}
For any vertex $v$, $T>0$, $\Theta<1$, if for any $t\leq T$, the sweep sets $S_i(\qq_t)$ of volume at most $K$ have B-ratio at least $\Theta$, then for any $0\leq t\leq T$ and $0\leq x\leq 2m$,
$$J(\qq_t,x)\leq \frac{2^tx}{K}+\sqrt{\frac{x}{d(v)}}\Big(2-\frac{\Theta^2}{4}\Big)^t.$$
\end{lemma}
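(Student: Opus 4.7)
The plan is to mirror the inductive argument used to prove Lemma~\ref{lemma:upper}, inserting one extra step to absorb the truncation. The key enabling observation is that by Fact~\ref{fact:truncation}(1), $\abs{\rr_{t-1}} \leq \abs{\qq_{t-1}}$ pointwise, and hence $J(\rr_{t-1}, y) \leq J(\qq_{t-1}, y)$ for every $y$; this lets us pass freely between the truncated and untruncated vectors with no quantitative loss.

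The induction is on $t$. The base case $t=0$ is immediate, since $\qq_0 = \chi_v$ gives $J(\qq_0, x) = \min(x/d(v), 1) \leq \sqrt{x/d(v)}$ on $[0, 2m]$. For the inductive step, I would use that $J(\qq_t, x)$ is piecewise linear in $x$ with breakpoints at $\vol(S_i(\qq_t))$ while the right-hand side is concave in $x$, so it suffices to verify the inequality at these breakpoints. For $x > K$, Proposition~\ref{prop:trunc}(1) already supplies $J(\qq_t, 2m) = \norm{\qq_t}_1 \leq 2^t$, and since the right-hand side exceeds $2^t$ whenever $x > K$, monotonicity in $x$ closes this case.

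The substantive case is $x = \vol(S_i(\qq_t)) \leq K$. Here I would invoke inequality~(\ref{eqn:potential}) from the proof of the convergence lemma with $\p := \rr_{t-1}$ and $U := S_i(\qq_t)$. Because $\qq_t = \rr_{t-1} M$, because $J(\qq_t, x)$ evaluated at a breakpoint equals $\rr_{t-1} M(L_i(\qq_t), -R_i(\qq_t))$, and because the lemma's hypothesis supplies $\beta(L_i(\qq_t), R_i(\qq_t)) \geq \Theta$, inequality~(\ref{eqn:potential}) delivers
$$J(\qq_t, x) \leq J(\rr_{t-1}, x(1+\Theta)) + J(\rr_{t-1}, x(1-\Theta)).$$
The monotonicity observation above replaces each $J(\rr_{t-1}, \cdot)$ by $J(\qq_{t-1}, \cdot)$, after which the induction hypothesis for $t-1$ together with $\sqrt{1+\Theta} + \sqrt{1-\Theta} \leq 2 - \Theta^2/4$ produces the required bound exactly as in the calculation of Lemma~\ref{lemma:upper}.

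The only subtle point, which I do not expect to be a serious obstacle, is that the hypothesis concerns sweep sets of $\qq_t$ rather than of $\rr_{t-1}$, whereas the convergence lemma is indexed by the input vector of $M$. This is fine because its proof goes through the auxiliary inequality~(\ref{eqn:potential}), which is valid for \emph{arbitrary} pair sets $U$, so we are free to take $U$ to be a sweep set of $\rr_{t-1} M = \qq_t$ and invoke the B-ratio bound given by assumption. Everything else is essentially a copy of Lemma~\ref{lemma:upper}, with the truncation contributing no loss.
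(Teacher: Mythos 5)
Your proposal is correct and follows essentially the same route as the paper: use $\abs{\rr_{t-1}}\leq\abs{\qq_{t-1}}$ to get $J(\rr_{t-1},\cdot)\leq J(\qq_{t-1},\cdot)$, apply the convergence bound to $\qq_t=\rr_{t-1}M$ at the sweep-set breakpoints (your detour through inequality~(\ref{eqn:potential}) for arbitrary $U$ is exactly how the indexing is meant to be read, and is if anything more careful than the paper's shorthand), handle $x>K$ via $\norm{\qq_t}_1\leq 2^t$ from Proposition~\ref{prop:trunc}, and then repeat the induction of Lemma~\ref{lemma:upper}. The only implicit step is the concavity argument that lets you replace the actual B-ratio $\beta\geq\Theta$ by $\Theta$ in the two arguments of $J$, which is already contained in the proof of the convergence lemma, so there is no gap.
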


\begin{proof}
We prove the lemma by combining the following observations and the proof of Lemma~\ref{lemma:upper}.

First we note that for any $t\leq T$ and $x\leq 2m$, $J(\rr_t,x)\leq J(\qq_t,x)$. This follows by the definition of the potential function. More specifically, let $\w\in [0,1]^n$ be a vector that achieves $J(\rr_t,x)$, that is, $\sum_{u}\w(u)d(u)=x$ and $J(\rr_t,x)=\sum_{v}|\rr_t|(v)\w(v)$. Then $J(\rr_t,x)\leq \sum_{v}|\qq_t|(v)\w(v)\leq J(\qq_t,x)$ since for any $v$, $|\rr_t|(v)\leq |\qq_t|(v)$. Furthermore, by the relation between $\qq_t$ and $\rr_{t-1}M$, we can always guarantee that $S_i(\qq_t)=S_i(\rr_{t-1}M)$ for every $i\leq n$.

Then by the conditions given in the lemma and the convergence Lemma~\ref{lemma:converge}, for $x=\vol(S_i(\qq_t))$, we have
\begin{eqnarray}
J(\qq_t,x)=J(\rr_{t-1}M,x)&\leq& J(\rr_{t-1}, x+\Theta x)+J(\rr_{t-1}, x-\Theta x) \nonumber \\
&\leq&J(\qq_{t-1}, x+\Theta x)+J(\qq_{t-1}, x-\Theta x).
\end{eqnarray}

Finally, we can use the same induction as in the proof of Lemma~\ref{lemma:upper} to show that the lemma's statement holds.
\end{proof}
Now by using Corollary~\ref{lemma:locallower} and Lemma~\ref{lemma:localupper}, we can show the correctness of the algorithm \texttt{LocDB} and thus prove Theorem~\ref{thm:local}.

\begin{proof}[Proof of Theorem~\ref{thm:local}]
We first show the correctness of \texttt{LocDB} and then bound its running time.
\begin{itemize}
\item (Correctness.)
As stated in the algorithm, we choose $T=\frac{\epsilon\ln 1600k}{6\theta}$. Let $U_\theta=U^T\subseteq U$ be the subset as described in Lemma~\ref{lemma:lower}, which has volume at least $\vol(U)/2$. Now let $v\in U_\theta$ and assume that in the algorithm \verb|LocDB|$(G,v,k,\theta,\epsilon)$, for any $t\leq T$, all the sweep sets $S_i(\qq_t)$ of volume at most $1600k^{1+\epsilon}$ have B-ratio at least $\Theta=\sqrt{48\theta/\epsilon}$, then by Lemma~\ref{lemma:localupper}, we have
    \begin{eqnarray*}
    J(\qq_t,\vol(S))\leq J(\qq_t,k) &\leq& 2^t\Big(\frac{k}{1600k^{1+\epsilon}}+\sqrt{k}\Big(1-\frac{\Theta^2}{8}\Big)^T\Big)\\
    &\leq&2^t\Big(\frac{1}{1600k^{\epsilon}}+\sqrt{k}\Big(1-\frac{6\theta}{\epsilon}\Big)^{\frac{\epsilon\ln 1600k}{6\theta}}\Big)\\
    &\leq& 2^T\Big(\frac{1}{1600k^{\epsilon}}+\frac{1}{1600k^{1/2}}\Big)\\
    &<& 2^T\frac{k^{-\epsilon}}{800},
    \end{eqnarray*}
    where the last inequality follows from the fact that $0<\epsilon<1/2$.

    On the other hand, by~Lemma~\ref{lemma:lower} and~Corollary~\ref{lemma:locallower} and that $\xi_0 T=\frac{k^{-1-\epsilon}}{800}$, we have
    \begin{eqnarray*}
    \abs{\qq_T}(U)\geq\abs{\chi_vM^T}|U|-\xi_0 T2^T\vol(U)&\geq& 2^T\Big(\frac{1}{400}(1-3\theta)^T-\xi_0 Tk\Big)\\
    &\geq & 2^T\Big(\frac{1}{400}(1-3\theta)^{\frac{\epsilon\ln (1600k)}{2(1-3\theta)}}-\frac{k^{-\epsilon}}{800}\Big)\\
    &\geq&2^T\Big(\frac{1}{400}(1600k)^{-2\epsilon/3}-\frac{k^{-\epsilon}}{800}\Big)\\
    &> &2^T\Big(\frac{1}{400}k^{-\epsilon}-\frac{k^{-\epsilon}}{800}\Big)\\
    &=&\frac{2^Tk^{-\epsilon}}{800},
    \end{eqnarray*}
where the last to third inequality follows from the assumption that $\theta<1/12$ and the last to second inequality follows from the assumption that $k>2560000$. (Also note that we can choose other bounds of $\theta,k$ so long as these two inequalities are satisfied. For example, $\theta<0.03,k>11000$.) Hence, we have derived a contradiction. Therefore, there exists at least one sweep set of volume at most $O(k^{1+\epsilon})$ and B-ratio at most $O(\sqrt{\theta/\epsilon})$.

\item (Running time.)
We first bound the time required in each iteration. For any $t\leq T$, instead of perform the dense vector multiplication to compute $\qq_t$, we keep record of the support of $\rr_t$, which has volume at most
$\norm{\qq_t}_1/\xi_t\leq 2^t/(\xi_02^t)=\xi_0^{-1}.$
    By definition, both the volume of the support and the computational time of $\qq_{t+1}$ are proportional to
    $\vol(\supp(\rr_{t}))$, which is at most $\xi_0^{-1}$ by the property of truncation operation.

    During the sweep process, we only need to sweep the vertices in $\supp(\rr_t)$. Sorting these vertices requires time
    $$O(|\supp(\rr_t)|\ln |\supp(\rr_t)|)\leq O(\vol(\supp(\rr_t))\ln \vol(\supp(\rr_t))).$$
    Computing the B-ratio of the sweep sets requires time $O(\vol(\supp(\rr_{t})))$. Therefore, in a single iteration, the computation takes time
    \begin{eqnarray*}
    O(\vol(\supp(\rr_t))+\vol(\supp(\rr_t))\ln \vol(\supp(\rr_t)))
    &=&O(\xi_0^{-1}+\xi_0^{-1}\ln\xi_0^{-1})\\
    &=&O(\xi_0^{-1}\ln\xi_0^{-1}).
    \end{eqnarray*}

    Since the algorithm takes $T$ iterations, the total running time is thus bounded by $$O(T\xi_0^{-1}\ln\xi_0^{-1})=O(\epsilon^2k^{1+\epsilon}\ln^3k/\theta^{2}).$$
\end{itemize}
\end{proof}

{\large \textbf{Acknowledgements.}}

The research is partially supported by NSFC distinguished young investigator award number
60325206, and its matching fund from the Hundred-Talent Program of the Chinese Academy of Sciences. Both
authors are partially supported by the Grand Project ``Network Algorithms and Digital Information'' of the
Institute of software, Chinese Academy of Sciences. The second author acknowledges the support of ERC grant No.
307696 and NSFC 61003030.

\bibliographystyle{splncs03}

\end{document}